\newcommand{\gray}[1]{\color{gray}#1\color{black}}
\renewcommand{\paragraph}[1]{\noindent {\bf #1}}
\newtheorem{theorem}{Theorem}
\newtheorem{definition}{Definition}
\newtheorem{corollary}{Corollary}
\begin{document}
\title{Making Wide Stripes Practical: Cascaded Parity LRCs for Efficient Repair and High Reliability}
\author{
  Fan Yu, Guodong Li, Si Wu, Weijun Fang, and Sihuang Hu
  \thanks{Fan Yu and Guodong Li are with the Key Laboratory of Cryptologic Technology and Information Security, Ministry of Education, and the School of Cyber Science and Technology, Shandong University, Qingdao, Shandong 266237, China (e-mail: fanyu@mail.sdu.edu.cn, guodongli@sdu.edu.cn). Si Wu is with the School of Computer Science and Technology, Shandong University, Qingdao, Shandong 266237, China (e-mail: siwu5938@gmail.com). Weijun Fang and Sihuang Hu are with the Key Laboratory of Cryptologic Technology and Information Security, Ministry of Education, and the School of Cyber Science and Technology, Shandong University, Qingdao, Shandong 266237, China, and also with the Quan Cheng Laboratory, Jinan 250103, China (e-mail: \{fwj, husihuang\}@sdu.edu.cn). }
}

\maketitle

\begin{abstract}
  Erasure coding with wide stripes is increasingly adopted to reduce storage overhead in large-scale storage systems. However, existing Locally Repairable Codes (LRCs) exhibit structural limitations in this setting: inflated local groups increase single-node repair cost, multi-node failures frequently trigger expensive global repair, and reliability degrades sharply. We identify a key root cause: local and global parity blocks are designed independently, preventing them from cooperating during repair. We present Cascaded Parity LRCs (CP-LRCs), a new family of wide stripe LRCs that embed structured dependency between parity blocks by decomposing a global parity block across all local parity blocks. This creates a cascaded parity group that preserves MDS-level fault tolerance while enabling low-bandwidth single-node and multi-node repairs. We provide a general coefficient-generation framework, develop repair algorithms exploiting cascading, and instantiate the design with CP-Azure and CP-Uniform. Evaluations on Alibaba Cloud show reductions in repair time of up to 41\% for single-node failures and 26\% for two-node failures.
\end{abstract}

\begin{IEEEkeywords}
  Distributed Storage Systems, Wide Stripes, Locally Repairable Codes
\end{IEEEkeywords}


\section{Introduction} \label{sec:intro}

Modern distributed storage systems increasingly rely on {\em erasure coding} to provide cost-effective fault tolerance at scale \cite{athlur25,kadekodi23,chen17,huang12,rashmi13,sathiamoorthy2013xor,muralidhar14}. Compared to replication, erasure coding significantly reduces storage overhead while improving system reliability (measured in mean-time-to-data-loss (MTTDL)) \cite{kim19,weatherspoon2002erasure}. It works by encoding a set of original data blocks into additional {\em parity blocks} to form a {\em stripe}, such that enough data and parity blocks within a stripe can reconstruct the original data. As the data volume continues to explode \cite{burgener22}, both industry \cite{backblaze-erasure-coding, vastdata-resilience, kadekodi23} and academia \cite{hu21} are exploring {\em wide stripes} (stripes with a large number of data blocks and a very small number of parity blocks) for extreme storage savings. For example, Vastdata \cite{vastdata-resilience} deploys wide stripes with 150 data blocks and four parity blocks, Google \cite{kadekodi23} deploys stripes with a width of 80, while the study \cite{hu21} deploys stripes with 128 data blocks and four parity blocks. Even a 1\% reduction in storage overhead translates into millions of dollars in operational savings \cite{huang12,kadekodi23}.

However, wide stripes with traditional Maximum Distance Separable (MDS) codes (e.g., Reed-Solomon Codes \cite{reed60}) incur prohibitive repair bandwidth \cite{hu21, shen25}. For example, repairing a single lost block in a $(128,4)$ Reed-Solomon stripe requires accessing 128 remaining nodes. This motivates the adoption of {\em Locally Repairable Codes (LRCs)} \cite{huang12, sathiamoorthy2013xor, kolosov20, kadekodi23}. An LRC partitions data blocks into {\em local repair groups} of small size to generate {\em local parity blocks}, reducing repair bandwidth for a single block failure; it also generates {\em global parity blocks} from all data blocks to preserve fault tolerance. For example, in a $(128, 2, 2)$ Azure LRC stripe \cite{huang12}, repairing a single lost block requires accessing only 64 nodes.

Recent wide stripe LRC approaches such as Azure's LRC \cite{hu21}, Azure's LRC+1 \cite{kolosov20}, and Google's Optimal/Uniform Cauchy LRCs \cite{kadekodi23} demonstrate their potential in large-scale systems. Yet, our analysis and prior studies reveal that existing approaches face three fundamental limitations as stripe width grows:
(1) Even though LRCs introduce locality, wide stripes inflate local group size, causing single-node repair (especially parity repair) to require significantly more bandwidth.
(2) Wide stripes substantially increase the probability of multi-node failures \cite{kadekodi23,yu23}. When multiple failures occur in the same local repair group, LRCs must fall back to expensive {\em global repair}, often accessing a large number of data blocks and global parity blocks, negating the benefits of LRCs.
(3) The combined effect of higher repair costs and increased failure probability sharply reduces MTTDL for wide stripes.

Across existing LRC designs, we observe a structural limitation: {\em the local and global parity blocks are designed independently}, even though they are linear combinations of the same data blocks. This independence prevents cooperation between parity blocks during repair, resulting in unnecessarily high repair bandwidth, particularly during parity failures. At the same time, {\em the number of parity blocks is small in wide stripe settings}, meaning that introducing structured dependency among them can significantly reduce parity repair cost. This leads to our central idea: {\em Couple the local and global parity blocks to create a repair-efficient cascaded structure}.

We propose {\em Cascaded Parity LRCs (CP-LRCs)}, a new class of wide stripe LRC constructions that create an explicit dependency between all local parity blocks and one global parity block. The core idea is to start from a standard $(k,r)$ MDS stripe and decompose the encoding coefficients of a global parity block, distributing them across the $p$ local parity blocks. This ensures the last global parity block equals the XOR of all $p$ local parity blocks, forming a cascaded parity group. At the same time, the MDS-level fault tolerance of the original stripe is maintained. This cascaded structure fundamentally changes how repairs occur in wide stripes:
(1) {\bf Parity repair becomes local}: a lost local or global parity block can be repaired using only a few nodes in the cascaded group.
(2) {\bf Multi-node repair becomes cheaper}: many multi-failure patterns that previously required global repair can now be handled through a sequence of local repairs.
(3) {\bf Repair latency and exposure time decrease}:  resulting in significantly higher MTTDL.

To make CP-LRCs practical, we design generic coefficient-generation rules and repair algorithms that preserve minimum distance, exploit cascading during repair, and support arbitrary $(k,r,p)$ parameters.  We further instantiate the framework with two system-relevant constructions: {\em CP-Azure} and {\em CP-Uniform}.

To summarize, this paper makes the following contributions:
\begin{itemize}
    \item We identify fundamental limitations of existing wide stripe LRCs in terms of single-node repair, multi-node failure handling, and reliability.
    \item We propose CP-LRCs, a new family of wide stripe LRCs that introduce structured dependency between local and global parity blocks.
    \item We design a general coefficient-construction framework that preserves the MDS-level fault tolerance while enabling cascading.
    \item We develop repair algorithms that exploit the cascaded structure to mitigate repair bandwidth for both single-node and multi-node failures.
    \item We implement CP-LRCs in a distributed storage prototype and evaluate performance in Alibaba Cloud \cite{alibaba25}. Results show that CP-LRCs reduce the single-node and two-node repair times of existing wide stripe LRCs by up to 41\% and 26\%, respectively.
\end{itemize}

We release the source code of our prototype with CP-LRCs at \url{https://github.com/Yf-holiday/Cascade-Parity-LRC.git}.

\section{Background} \label{sec:background}

\subsection{Erasure Coding} \label{sec:erasure-coding}

\paragraph{Basics.} {\em Erasure coding} is widely deployed in large-scale distributed storage systems (e.g., Ceph, HDFS) to provide fault tolerance with much lower storage overhead than replication \cite{chen17,rashmi13,sathiamoorthy2013xor,muralidhar14}. With erasure coding, data is typically organized into {\em stripes}, where $k$ data blocks are encoded into $n$ blocks (including the $k$ original data blocks and $n-k$ additional {\em parity blocks} that provide redundancy). The $n$ blocks of a stripe are distributed across $n$ distinct nodes for node-level fault tolerance. Thus, a {\em single-node failure} refers to the loss of a single block within a stripe, while a {\em multi-node failure} refers to the loss of multiple blocks in the same stripe. Maximum Distance Separable (MDS) codes are favored in distributed storage systems due to their optimal fault tolerance properties within the given storage overhead, i.e., a $(n, k)$ MDS code tolerates any $n-k$ node failures \cite{plank13}. However, MDS codes like Reed-Solomon Codes require accessing all $k$ data blocks to repair up to $n-k$ node failures, which incurs excessive bandwidth consumption \cite{li19fast}.

\paragraph{Locally Repairable Codes (LRCs).} To suppress the high repair cost in MDS codes, LRCs are adopted. An LRC with parameters $(k, r, p)$ encodes $k$ data blocks into $r$ {\em global parity blocks} similar to the MDS codes to provide fault tolerance. It partitions the data blocks (and possibly global parity blocks) into $p$ {\em local repair groups}, each containing up to $k$ blocks. By adding one {\em local parity block} to each local repair group, an LRC significantly reduces the repair bandwidth for single-node failures, i.e., it requires accessing only a small set of blocks for single-node repair. Due to their high repair efficiency, LRCs are widely employed in the commercial storage systems of major enterprises such as Google \cite{kadekodi23,athlur25}, Microsoft \cite{huang12,chen17}, and Facebook \cite{sathiamoorthy2013xor}.

\subsection{Wide Stripe LRCs}
\label{sec:wide-lrc}

\paragraph{Constructions.} As LRCs greatly enhance the single-node repair efficiency compared to MDS codes, they are utilized to construct wide stripes for further storage savings (Section~\ref{sec:intro}). Below, we introduce several representative wide stripe LRC constructions used in industry and academia, illustrated in Figure~\ref{fig:lrc_construction}.

\begin{figure*}
    \centering
    \begin{subfigure}{0.24\linewidth}
        \centering
        \includegraphics[width=0.9\linewidth]{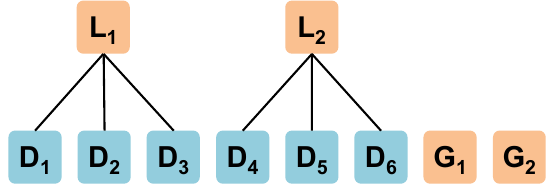}
        \caption{Azure LRC}
    \end{subfigure}
    \begin{subfigure}{0.24\linewidth}
        \centering
        \includegraphics[width=0.9\linewidth]{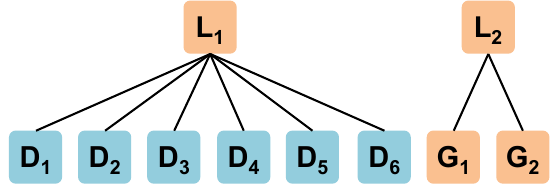}
        \caption{Azure LRC+1}
    \end{subfigure}
    \begin{subfigure}{0.24\linewidth}
        \centering
        \includegraphics[width=0.9\linewidth]{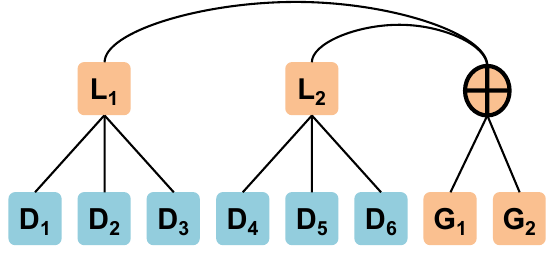}
        \caption{Optimal Cauchy LRC}
    \end{subfigure}
    \begin{subfigure}{0.24\linewidth}
        \centering
        \includegraphics[width=0.9\linewidth]{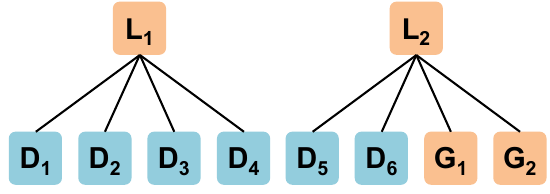}
        \caption{Uniform Cauchy LRC}
    \end{subfigure}
    \caption{ Illustration of LRC constructions with parameters $(k=6, r=2, p=2)$. $D_1, D_2, \ldots, D_6$ denote data blocks; $L_1$, $L_2$ are local parity blocks; and $G_1$, $G_2$ are global parity blocks.}
    \label{fig:lrc_construction}
\end{figure*}

{\em Azure LRC} \cite{huang12}: It generates the global parity blocks by encoding the data blocks using an $(r \times k)$ Vandermonde matrix \cite{plank09}. The construction assumes that $k$ is divisible by $p$, divides $k$ data blocks evenly into $p$ local repair groups, and within each group, generates a local parity block by XORing the $\tfrac{k}{p}$ data blocks. Furthermore, it tolerates any $r+1$ node failures across a stripe.

    {\em Azure LRC+1} \cite{kolosov20}: An $(k, r, p)$ Azure LRC+1 code is constructed from an $(k, r, p-1)$ Azure LRC code by adding a local parity block for all $r$ global parity blocks. As such, global parity blocks are protected by a local parity. It also tolerates any $r+1$ node failures as in Azure LRC.

    {\em Optimal Cauchy LRC} \cite{kadekodi23}: It constructs the global parity blocks based on a Cauchy matrix \cite{plank06} (in contrast to the Vandermonde matrix in Azure LRC). It also partitions $k$ data blocks evenly into $p$ local repair groups. To achieve optimal {\em minimum distance} ({\em distance} for short), it adds an XOR sum of all global parity blocks into each local repair group to generate a local parity block. It can tolerate any $r+1$ node failures.

    {\em Uniform Cauchy LRC} \cite{kadekodi23}: Similar to Optimal Cauchy LRC, it generates the global parity blocks from a Cauchy matrix. Then, it uniformly divides all data blocks and global parity blocks into local repair groups and generates the local parity blocks. It achieves small, uniform locality but tolerates any $r$ node failures.

\paragraph{Metrics.} To evaluate the efficiency of wide stripe LRCs in terms of repair and reliability, we introduce several metrics. Note that the same metrics are defined in \cite{kadekodi23}.

Average degraded read cost (ADRC) measures the average number of nodes that need to be accessed to repair all data blocks, defined as:
\[
    \text{ADRC} = \frac{1}{k}\sum_{i=1}^k \text{cost}(B_i),
\]
where $\text{cost}(B_i)$ denotes the number of nodes accessed for repairing block $B_i$.

Average single-node repair cost ($\text{ARC}_1$) extends $\text{ADRC}$ to all blocks in a stripe:
\[
    \text{ARC}_1 = \frac{1}{n}\sum_{i=1}^n \text{cost}(B_i).
\]

For MDS codes, $\text{cost}(B_i) = k$ for any block; however, for LRCs, the value depends on whether the block is a data block, local parity block, or global parity block.

Average two-node repair cost ($\text{ARC}_2$) further extends $\text{ARC}_1$ to two-node failure
cases:
\[
    \text{ARC}_2 = \frac{\sum_{i=1}^{n}\sum_{\substack{j=1 \\ j \neq i}}^{n} \text{cost}(B_i, B_j)}{\binom{n}{2}},
\]
where $\text{cost}(B_i, B_j)$ is the number of nodes accessed to repair both $B_i$ and $B_j$, which may involve {\em local repair} (i.e., performing repair by accessing the local parity blocks) or {\em global repair} (i.e., performing repair by accessing the global parity blocks) in LRCs.

\begin{figure}
    \centering
    \includegraphics[width=0.9\linewidth]{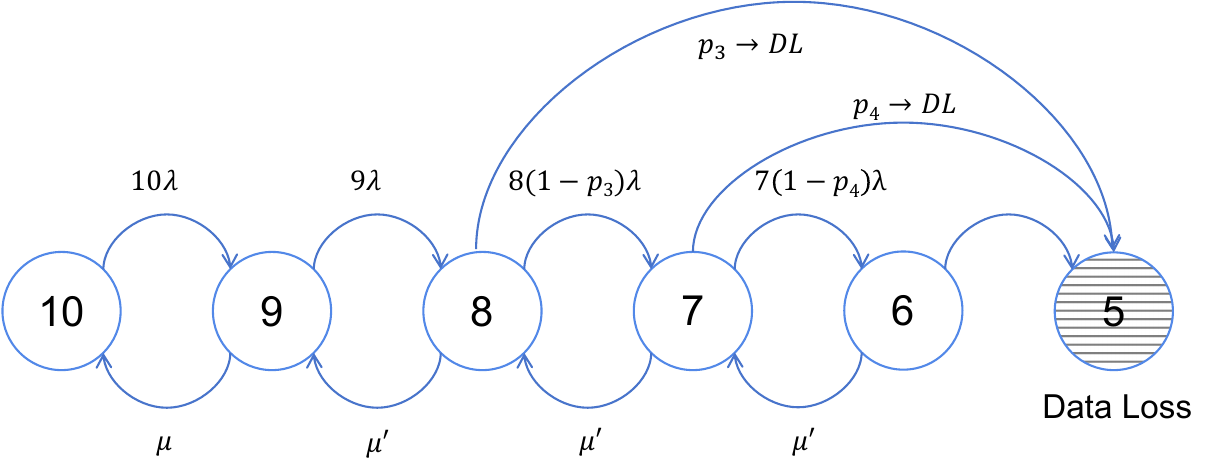}
    \caption{Markov chain for a $(6,2,2)$ LRC.}
    \label{fig:mttdl_markov}
\end{figure}

To evaluate reliability, we adopt the MTTDL metric as in prior studies (~\cite{hu21,sathiamoorthy2013xor,Cidon15,dimakis10,Silberstein14,greenan2010mean}).
The reliability process is modeled using a Markov chain, as shown in Figure~\ref{fig:mttdl_markov}, where each state presents the number of surviving nodes in a stripe.
For example, state 10 indicates that all 10 nodes are healthy, while state 5 indicates that five nodes in a stripe are failed, leading to data loss.

The transition from state $i$ to $i-1$ corresponds to a node failure with rate $i\lambda$, where $\lambda$ is the per-node failure rate.
When the number of failed nodes exceeds the number of global parity blocks ($r$), repair may fail with probability $p_i$, and the transition rate becomes $i(1-p_i)\lambda$.
Conversely, the transition from state $i-1$ to $i$ occurs via repair at rate $\mu_i$, which is primarily determined by the repair time for single-node failures and the failure detection time for multi-node failures.
Finally, MTTDL is computed from the steady-state probability distribution of this Markov chain.
\section{Challenges and Motivations} \label{sec:challenges_and_motivations}

\subsection{Challenges in Wide Stripe LRCs} \label{sec:challenges-wide-lrc}

\paragraph{Increased single-node repair cost.} While LRCs reduce the repair bandwidth for single-node failures by accessing a few blocks, wide stripe LRCs significantly increase the required bandwidth. For example, in a $(6, 2, 2)$ Azure LRC stripe, repairing a data block requires accessing three nodes, whereas in a $(24, 2, 2)$ Azure LRC stripe, it requires accessing 12 nodes. This is because wide stripe LRCs have larger local repair groups, leading to higher repair bandwidth even for single-node failures. This issue is exacerbated when repairing a global parity block (i.e., in Azure LRC), which requires accessing all $k$ data blocks, resulting in prohibitively high bandwidth cost.

\paragraph{More frequent multi-node repair and expensive cost.} In wide stripe LRCs, the likelihood of multi-node failures increases significantly as the stripe width (i.e., $n$) grows. A field study by Google \cite{kadekodi23} reports that its LRC deployment with a stripe width of 80 observes four times as many stripes with at least four failures as with a width of 50. Similarly, simulation results \cite{yu23} indicate that the fraction of stripes experiencing multi-node failures grows rapidly with increasing stripe width, reaching approximately 30\% when the width is 64.

However, multi-node failures are highly likely to trigger global repair, which typically involves accessing a large number of data blocks and some global parity blocks, thereby significantly increasing repair bandwidth \cite{kolosov20}. Specifically, since LRCs provide only one local parity block within each local repair group, they can tolerate and repair a single node failure locally. If more than one node fails within the same local repair group, the system must resort to global repair. For example, in a $(24, 2, 2)$ Azure LRC stripe, the failure of two data blocks in the same group triggers global repair, requiring access to 24 nodes.

\paragraph{Decreased reliability.} Wide stripe LRCs face a significant reliability challenge, as their MTTDL degrades significantly compared to narrow stripe configurations. The reasons are two-fold: (1) The probability of failures, particularly multi-node failures, increases with larger stripe width; and (2) the repair efficiency for both single-node and multi-node failures decreases as the number of blocks required for repair grows. For example, in a $(6, 2, 2)$ Azure LRC stripe, the MTTDL is $2.66 \times 10^{17}$ years, while in a wider $(24, 2, 2)$ Azure LRC stripe, the MTTDL drops significantly to $1.90 \times 10^{14}$. Such observation underscores the need to improve the reliability of wide stripe LRCs.

\subsection{Motivations for Cascaded Parity LRCs} \label{sec:motivation}

\paragraph{Key observations and idea.} In existing wide stripe LRC constructions, the local and global parity blocks are designed independently. For example, in Azure LRC, there is no dependency between the two, which results in high repair bandwidth: repairing a global parity block requires accessing $k$ nodes, while repairing a local parity block involves $\tfrac{k}{p}$ nodes. However, both the local and global parity blocks are essentially linear combinations of the original data blocks, differing only in their encoding coefficients. This reveals an opportunity to intentionally introduce a dependency between them to reduce the repair costs.

Moreover, in wide stripe settings, the number of local and global parity blocks is relatively small. Coupling them effectively can significantly lower the bandwidth required for parity repair. Based on these insights, we propose {\em Cascaded Parity LRCs (CP-LRCs)}, which create an explicit 'channel' between the local and global parity blocks by carefully designing their encoding coefficients. Specifically, CP-LRCs cascade local parity blocks with a global parity block, enabling parity repair within the cascaded group rather than the entire stripe. This design not only reduces parity repair bandwidth but also increases the likelihood of low-cost local repair during multi-node failures, thereby lowering the probability of invoking expensive global repair.

\paragraph{Benefits and motivations.} By cascading the local and global parity blocks, CP-LRCs bring three benefits, which also motivate our design.

    {\em Decreasing single-node repair cost.} CP-LRCs form a cascaded group that directly reduces the number of nodes accessed for parity repair. For example, a $(24, 2, 2)$ CP-Azure LRC (i.e., applying CP-LRC to Azure LRC) cascading $L_1$ and $L_2$ with $G_2$ reduces the number of nodes accessed to repair $L_1$/$L_2$/$G_2$ from 12/12/24 in the original Azure LRC to just two. Besides, cascading shortens the size of the original local repair groups. For example, in Figure~\ref{fig:cp-lrcs}(c), applying CP-LRC to Uniform Cauchy LRC ({\em CP-Uniform}) with the parameters of $(6,2,2)$ shortens the group size for $D_1,\ldots, D_{4}$ to 3 (from 4 in the original Uniform Cauchy LRC), further reducing the repair cost.

    {\em Improving the local repair ratio and reducing multi-node repair cost.} Cascading increases the number of local repair groups, which raises the probability that multi-node failures can still be resolved through local repair rather than triggering costly global repair. For example, in a $(24, 2, 2)$ CP-Azure LRC, if $D_1$ and $L_1$ fail simultaneously, the conventional approach (e.g., in Azure LRC) must invoke global repair, which accesses 24 nodes (i.e., $D_2, D_3,\ldots, D_{24}$ and $G_1$). In contrast, under CP-Azure, $L_1$ can be repaired within the cascaded group first, and then $D_1$ can be repaired within its local repair group. This two-step local repair involves only 13 nodes (i.e., $ D_2,D_3,\ldots,D_{12}$, $L_2$, and $G_2$).

    {\em Increasing reliability.} By lowering both single-node and multi-node repair costs, CP-LRCs accelerate the repair operations and reduce exposure time to data loss. This leads to improved system reliability (i.e., higher MTTDL) compared to conventional wide stripe LRCs.

\begin{table}[!t]
    \centering
    \caption{Comparison of Repair and Reliability of Different LRCs}
    \label{tab:lrc_performance}
    \resizebox{\linewidth}{!}{
        \begin{tabular}{|c|c|c|c|c|c|}
            \hline
            \textbf{Parameters} & \textbf{Scheme}    & \textbf{ADRC} & \textbf{ARC$_1$} & \textbf{ARC$_2$} & \textbf{MTTDL} \\ \hline
            \multirow{4}{*}{(6,2,2)}
                                & Azure LRC          & 3.00          & 3.60             & 6.00             & 2.66e+17       \\
                                & Azure LRC+1        & 6.00          & 4.80             & 6.22             & 1.99e+17       \\
                                & Optimal Cauchy LRC & 5.00          & 5.00             & 6.27             & 1.91e+17       \\
                                & Uniform Cauchy LRC & 4.00          & 4.00             & 6.22             & 2.39e+17       \\
                                & CP-Azure           & 3.00          & 3.00             & 5.80             & 3.19e+17       \\
                                & CP-Uniform         & 3.50          & 3.10             & 6.00             & 3.09e+17       \\ \hline
            \multirow{4}{*}{(24,2,2)}
                                & Azure LRC          & 12.00         & 12.86            & 24.00            & 1.90e+14       \\
                                & Azure LRC+1        & 24.00         & 21.64            & 24.07            & 1.13e+14       \\
                                & Optimal Cauchy LRC & 13.00         & 13.00            & 25.17            & 1.89e+14       \\
                                & Uniform Cauchy LRC & 14.00         & 13.00            & 24.07            & 1.89e+14       \\
                                & CP-Azure           & 12.00         & 11.36            & 21.82            & 2.16e+14       \\
                                & CP-Uniform         & 12.50         & 11.39            & 22.03            & 2.32e+14       \\ \hline
        \end{tabular}
    }
\end{table}

\paragraph{Direct demonstration.} Table~\ref{tab:lrc_performance} presents a direct performance comparison between existing LRC schemes and our CP-LRCs. CP-Azure and CP-Uniform consistently achieve the smallest and second smallest $\text{ARC}_1$ and $\text{ARC}_2$ values among all schemes, indicating superior single- and multi-node repair efficiency. They also exhibit low ADRC values, reflecting reduced degraded read bandwidth. Furthermore, both CP-Azure and CP-Uniform significantly improve MTTDL, demonstrating greater reliability than existing wide stripe LRC constructions.

\section{Code Design} \label{sec:design}

\begin{figure}
  \centering
  \begin{subfigure}{0.485\linewidth}
    \centering
    \includegraphics[width=0.95\linewidth]{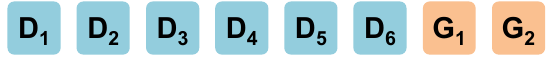}
    \caption{A ($4, 2$) MDS stripe}
  \end{subfigure}
  \\[.5em]
  \begin{subfigure}{0.49\linewidth}
    \centering
    \includegraphics[width=0.95\linewidth]{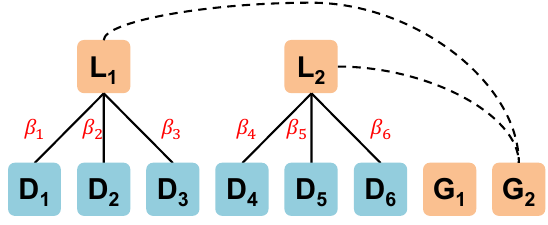}
    \caption{A ($6, 2, 2$) CP-Azure LRC}
  \end{subfigure}
  \begin{subfigure}{0.49\linewidth}
    \centering
    \includegraphics[width=0.95\linewidth]{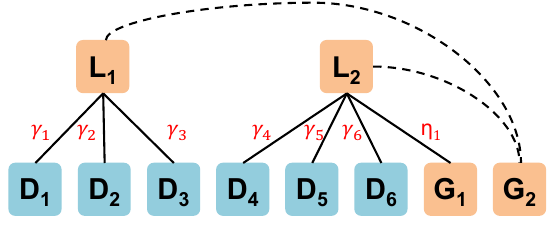}
    \caption{A ($6, 2, 2$) CP-Uniform LRC}
  \end{subfigure}
  \caption{
    Illustration of CP-LRCs applied to Azure LRC and Uniform LRC with parameters $(6,2,2)$. $\beta_1, \beta_2,\ldots ,\beta_{6}$ and $\gamma_1 ,\gamma_2,\ldots ,\gamma_{6},\eta_1$ are the encoding coefficients for local parities. Note that $L_1$, $L_2$, and $G_2$ form a cascaded group, i.e., both CP-Azure and CP-Uniform satisfy that $G_2 = L_1 + L_2$.
  }
  \label{fig:cp-lrcs}
\end{figure}

The previous section shows that {\em Cascaded Parity LRCs (CP-LRCs)} substantially improve the repair efficiency and reliability by coupling all local parity blocks with a global parity block. Although the high-level idea of cascading appears simple, realizing CP-LRCs in a robust and practical form is far from straightforward. A careful design is required to ensure that the cascaded structure delivers the intended repair benefits while preserving the original fault tolerance of LRCs and adapting to general coding parameters.

  {\bf Remaining challenges:} First, the cascaded structure relies on a precise selection of encoding coefficients. Unlike conventional LRCs, where the local and global parity coefficients are designed independently, CP-LRCs introduce dependencies that must be crafted to ensure the resulting code enables cascading while maintaining its minimum distance. Second, achieving efficient single-node and multi-node repairs requires coordinated interactions between local repair groups and the cascaded group, ensuring that failures are handled locally whenever possible. This requires repair algorithms that fully exploit the cascaded structure without triggering unnecessary global repair. Finally, to make CP-LRCs broadly applicable, the design must generalize to arbitrary $(k, r, p)$ parameters, which introduces additional constraints on coefficient generation.

In this section, we address these challenges by presenting a complete design of CP-LRCs. We describe how to construct encoding coefficients that enable reliable cascading, how the single-node and multi-node repair algorithms exploit this structure, and how the coefficient design extends to general parameters.

\subsection{Overview of CP-LRC Framework} \label{subsec:desgin_overview}

A CP-LRC code with parameters $(k,r,p)$ starts from a standard $(k,r)$ MDS-coded stripe with $k$ data blocks and $r$ global parity blocks (e.g., a Cauchy Reed-Solomon stripe \cite{plank06}). It then augments the stripe with $p$ local parity blocks by decomposing the encoding coefficients of the last global parity block and distributing them across the $p$ local parity blocks. With this construction, the last global parity block equals the XOR sum of all $p$ local parity blocks, forming a cascaded structure while preserving the MDS-level fault tolerance of the underlying stripe.  This cascaded design enables both single-node and multi-node repairs to be executed within local repair groups and the cascaded group, improving performance. We next show the base construction of CP-LRCs (Section~\ref{sec:base}), and then instantiate our framework with two concrete designs: CP-Azure (Section~\ref{sec:cp-azure}) and CP-Uniform (Section~\ref{sec:cp-uniform}).

\subsection{Starting Point: The Base MDS Stripe} \label{sec:base}

We build the base stripe using a systematic $(k,r)$ MDS code over $\mathrm{GF}(2^w)$. The stripe contains $k$ data blocks $D_1,\ldots,D_k$ and $r$ global parity blocks $G_1,\ldots,G_r$. Each global parity block is a linear combination of all $k$ data blocks with nonzero coefficients. Mathematically, for global parity block $G_j$ (where $1\le j\le r$),
\begin{equation}\label{eq:MDS-encoding}
  G_j = \alpha_{1,j}D_1 + \alpha_{2,j}D_2 + \cdots + \alpha_{k,j}D_k,
\end{equation}
where $\alpha_{i,j}$ are nonzero coefficients in the Galois Field $\mathrm{GF}(2^w)$.

\paragraph{MDS property.} A $(k,r)$ MDS-coded stripe tolerates any $r$ simultaneous block failures; or equivalently, any $k$ blocks (data or parity) suffice to reconstruct the entire stripe ( \cite{rawat14,blaum2013partial}). Repairing a single failed block requires $k$ surviving blocks.

In our CP-LRC design, the base stripe can be instantiated with any standard MDS code (e.g., Vandermonde matrix-based Reed-Solomon Codes~\cite{reed60} and Cauchy matrix-based Reed-Solomon Codes~\cite{plank06}).

\paragraph{A $(6,2)$ MDS-coded stripe.}
Figure~\ref{fig:cp-lrcs}(a) shows a concrete MDS-coded stripe with parameters $(6,2)$, which comprises six data blocks $D_1,D_2,\ldots,D_6$ and two global parity blocks $G_1,G_2$. The two global parity blocks are linear combinations of the six data blocks, i.e.,
\begin{equation}\label{eq:G_1}
  G_1 = \alpha_{1,1}D_1 + \alpha_{2,1}D_2 +\cdots+\alpha_{6,1}D_6,
\end{equation}
\begin{equation}
  G_2 = \alpha_{1,2}D_1 + \alpha_{2,2}D_2 +\cdots+\alpha_{6,2}D_6.
\end{equation}\label{eq:G_2}
In this $(6,2)$ MDS stripe, any $6$ blocks (data or parity) can reconstruct the whole stripe.

\subsection{CP-Azure LRC} \label{sec:cp-azure}

To build the local repair groups of the $(k, r, p)$ CP-Azure LRC, we first partition the $k$ data blocks of the base $(k, r)$ MDS stripe into $p$ groups evenly. Then, we build $p$ local parity blocks (denoted $L_1, L_2, \ldots, L_p$) by linearly combining the data blocks in each group.

The building of local parity blocks in CP-Azure LRC satisfies the rule that the last global parity block $G_r$ equals the field-sum (XOR) of all local parity blocks, i.e,
\begin{equation}\label{eq:implied-repair}
  L_1 + L_2 + \cdots + L_p = G_r.
\end{equation}
Thus, we generate local parity blocks in CP-Azure LRC using the coefficients in which the last global parity block $G_r$ is formed from data blocks in the base MDS stripe. Recall that the last global parity block $G_r$ in the base MDS stripe is computed as
\begin{equation}\label{eq:xxx-1}
  G_r = \beta_{1}D_1 + \beta_{2}D_2 + \cdots + \beta_{k}D_k,
\end{equation}
where $\beta_{i} = \alpha_{i,r}$ are the nonzero linear combination coefficients used in forming $G_r$ from data blocks.

Without loss of generality, we suppose that $k$ is a multiple of $p$ for easier explanation, and let $g = k/p$ be the number of data blocks in each local repair group. Then, we build local parity block $L_j$ (where $1 \le j \le p$) as
\begin{equation}\label{eq:xxx-2}
  L_j = \beta_{(j-1)g+1}D_{(j-1)g+1} + \cdots + \beta_{jg}D_{jg}.
\end{equation}

\paragraph{Single-node repair in CP-Azure LRC.}
The first step in repairing a single failed block is to identify its type. To be specific, we partition single-node failures into four types: (1) the failure of one data block, (2) the failure of one of the first $r-1$ global parity blocks, (3) the failure of the last global parity block, and (4) the failure of one local parity block. The repair method for each type proceeds as follows:
\begin{enumerate}[label=(\arabic*)]
  \item If a data block fails, we repair it within its local repair group by reading its local parity block and the other $g-1$ data blocks in its group; the repair bandwidth is $g$ blocks.
  \item If one of the first $r-1$ global parity blocks fails, we recompute it from the base MDS stripe with repair bandwidth of $k$ blocks.
  \item If the last global parity block $G_r$ fails, we read the $p$ local parity blocks and sum them; the bandwidth is $p$ blocks.
  \item If a local parity block $L_j$ fails, there are two equally simple choices: we can read its $g$ data blocks to recompute it, or use the identity~\eqref{eq:implied-repair} to read $G_r$ and the other $p-1$ local parity blocks; we pick the repair procedure with smaller repair bandwidth, i.e., $\min\{g,p\}$ blocks. In typical wide stripe settings where $g > p$, we adopt the second method to repair the local parity block $L_j$.
\end{enumerate}

\paragraph{Multi-node repair in CP-Azure LRC.}\label{sec:muti-node-cp-azure}
In CP-Azure LRC, multi-node repair follows a 'local-first, global-as-fallback' policy. Given a failure pattern, the repair procedure determines whether each failure can be handled within a local repair group or the cascaded group, and triggers global repair only when necessary. The following three cases summarize the decision process.
\begin{enumerate}[label=(\arabic*)]
  \item {\bf All groups contain at most one failure.} In this case, all failures can be repaired locally. Note that if two failures occur in the same group, but one is a local parity block, they are treated as belonging to {\em different} groups: one in the local repair group and the other in the cascaded group. For example, in Figure~\ref{fig:cp-lrcs}(b), if $D_1$ and $L_1$ fail simultaneously, $D_1$ is considered a failure in its local repair group, whereas $L_1$ is treated as a failure in the cascaded group. Thus, two independent single-node local repairs are performed.
  \item {\bf A group has more than one failure.} This situation cannot be resolved locally. The system must perform global repair by accessing $k$ surviving blocks. The failures within groups containing only one lost block are repaired locally, and the remaining failures are resolved through global repair. When repairing a local parity block, the reconstruction reuses data accessed during global repair to avoid redundant bandwidth.
  \item {\bf A failure occurs in one of the first $(r-1)$ global parity blocks.} Because these global parity blocks are outside the cascaded structure, their repair always triggers global repair. The failures are handled according to the rules in Case (2).
\end{enumerate}
Note that the maximum number of blocks accessed for multi-node repair is $k$, as the $k$ blocks selected for global repair already include blocks necessary for local repairs.

\paragraph{Fault tolerance analysis.} A $(k, r, p)$ CP-Azure LRC preserves the fault-tolerance of the underlying $(k,r)$ MDS code \cite{blaum2013partial,gopalan12locality,greenan2010mean} (Section~\ref{sec:base}), thus it tolerates any $r$ node failures.  However, it cannot tolerate arbitrary $r+1$ node failures. For example, if $r+1$ data blocks fail within one local repair group, the surviving $r$ global parity blocks and the local parity block are insufficient to decode the missing data blocks, because the local parity block is linearly dependent on the last global parity block under the cascaded construction. Thus, the minimum distance is exactly $r+1$. Furthermore, failure patterns of size $r+i$ (for $1 \le i \le p$) are still decodable as long as $i$ failures occur in $i$ distinct groups. In these cases, the $r+i$ erasures can be corrected using the $r$ global parity blocks with the $i$ available local parity blocks.

\paragraph{A $(6,2,2)$ CP-Azure LRC.} Figure~\ref{fig:cp-lrcs}(b) illustrates a $(6,2,2)$ CP-Azure LRC stripe, which starts from a $(6,2)$ MDS stripe shown in Figure~\ref{fig:cp-lrcs}(a). Then, we partition the six data blocks into two local repair groups, i.e., $(D_1, D_2, D_3)$ and $(D_4, D_5, D_6)$. Next, for each group, we generate a local parity block based on the coefficients of the linear combination from $G_2$. More precisely, we have
\begin{equation}\label{eq:L_1}
  L_1 = \alpha_{1,2}D_1 + \alpha_{2,2}D_2 +\alpha_{3,2}D_3,
\end{equation}
\begin{equation}
  L_2 = \alpha_{4,2}D_4 + \alpha_{5,2}D_5 +\alpha_{6,2}D_6,
\end{equation}\label{eq:L_2}
i.e., the coefficients for $G_2$ are decomposed and spread across $L_1$ and $L_2$. Obviously, $L_1 + L_2 = G_2$.

Examples of single-node repair:
(1) If a data block $D_1$ fails, we repair it by collecting $D_2,D_3,L_1$; the repair bandwidth is 3 blocks.
(2) If the first global parity block $G_1$ fails, we repair it by accessing all data blocks; the repair bandwidth is six blocks.
(3) If the last global parity block $G_2$ fails, we repair it within the cascaded group; the repair bandwidth is two blocks.
(4) If a local parity block (e.g., $L_1$) fails, we also repair it within the cascaded group, meaning the repair bandwidth is two blocks.

Examples of multi-node repair:
(1) If each group has no more than one failure, e.g., $ D_1$ and $ G_2$ fail, we collect $ D_2$, $D_3$, and $ L_1$, $L_2$ to repair $D_1$ and $G_2$, respectively; the repair bandwidth is four blocks.
(2) If there exits more than one failure in one group, e.g., $D_1, D_2, L_2$ fail, we collect $D_3, D_4, D_5, D_6, G_1, G_2$ for repair; note that $L_2$ uses $D_4, D_5, D_6$ that are accessed for global repair to save the cost; the repair bandwidth is six blocks.
(3) If one of the first $r - 1$ global parity blocks fails, e.g., $D_1, G_1$ fail, we collect $D_2, D_3, D_4, D_5, D_6, G_2$ to repair them simultaneously; the repair bandwidth is 6 blocks.

\subsection{CP-Uniform LRC} \label{sec:cp-uniform}

To build the local repair groups of a $(k,r,p)$ {CP-Uniform} LRC, we treat data blocks and certain global parity blocks {the same} during grouping. Concretely, we take all blocks except the last global parity block (i.e., the $k$ data blocks $D_1,\ldots, D_k$ and the first $r-1$ global parity blocks $G_1,\ldots, G_{r-1}$) and split them as evenly as possible into $p$ groups. Each group then produces one local parity block $L_i$ by linearly combining its own items (which may include both data and global parity blocks).

CP-Uniform follows the same central rule as CP-Azure:
\begin{equation}\label{eq:implied-repair-uniform}
  L_1 + L_2 + \cdots + L_p = G_r.
\end{equation}
To achieve this, we suppose that there exist $k + r -1$ nonzero coefficients $\gamma_i$ (where $1\le i\le k$) and $\eta_j$ (where $1\le j\le r-1$) in GF($2^w$) such that the base MDS code satisfies the following linear combination:
\begin{equation}\label{eq:combination}
  G_r = \gamma_1D_1 + \cdots + \gamma_kD_k + \eta_1G_1 + \cdots + \eta_{r-1}G_{r-1}.
\end{equation}
Note that in the appendix, we explicitly give these $k+r-1$ linear combination coefficients satisfying~\eqref{eq:combination} for the base $(k,r)$ Cauchy Reed-Solomon Codes.

Now we are ready to describe how to build the $p$ local parity blocks. Still, without loss of generality, we suppose that $k+r-1$ is a multiple of $p$ and let $h = (k+r-1)/p$ be the number of (data or parity) blocks in each local repair group. Then, each local parity block is a linear combination of the $h$ blocks within the corresponding group, with coefficients given by equation~\eqref{eq:combination}. Then, we can check that the local parity blocks and the last global parity block satisfy the equation~\eqref{eq:implied-repair-uniform}.

\paragraph{Single-node repair in CP-Uniform LRC.} The first step is to identify the failed block type. Here, we partition single-node failures into three types: (1) the failure of a data block or one of the first $r-1$ global parity blocks, (2) the failure of the last global parity block, and (3) the failure of one local parity block. The repair methods for CP-Uniform are:
\begin{enumerate}[label=(\arabic*)]
  \item If a data block or one of the first $r-1$ global parity blocks fails. Repair within its local repair group by reading the local parity block of that group and other $h-1$ items in the same group; the repair bandwidth is $h$ blocks.
  \item If the last global parity block fails. Use the identity in equation~\eqref{eq:implied-repair-uniform} and sum all $p$ local parity blocks; the repair bandwidth is $p$ blocks.
  \item If a local parity block fails. We have two choices: (a) read the $h$ items in its own group and recompute this local parity block (with repair bandwidth $h$ blocks), or (b) use equation~\eqref{eq:implied-repair-uniform} to read $G_r$ and the other $p-1$ local parity blocks (with repair bandwidth $p$ blocks). We choose the option with a smaller bandwidth, i.e., $\min\{h,p\}$ blocks. In typical wide stripe settings where $h > p$, we adopt the second method to repair the local parity block.
\end{enumerate}
As a final remark, when $k+r-1$ is not a multiple of $p$, group sizes differ by at most one; use the actual group size (either $\lfloor (k+r-1)/p\rfloor$ or $\lceil (k+r-1)/p\rceil$) in place of $h$ for the group-based repairs.

\paragraph{Multi-node repair in CP-Uniform LRC.}\label{sec:muti-node-cp-uni} CP-Uniform LRC adopts the same 'local-first, global-as-fallback' policy as CP-Azure LRC. Given a failure pattern, the system determines whether to perform local repairs only or invoke global repair based on the following two cases.
\begin{enumerate}[label=(\arabic*)]
  \item {\bf All groups contain at most one failure.} In this case, all failures are independent and can be repaired locally.
  \item {\bf A group has more than one failure.} Local repair is insufficient. The system triggers global repair by accessing $k$ surviving blocks. Failures in groups containing only one lost block are first repaired locally, while the remaining failures (those in overloaded groups) are repaired globally. Similar to CP-Azure LRC, the repair of a local parity block reuses data accessed during global repair to avoid redundant bandwidth.
\end{enumerate}
As in CP-Azure LRC, the number of accessed blocks never exceeds $k$, because the $k$ blocks selected for global repair already cover all data required by local repairs.

\paragraph{Fault tolerance analysis.} A $(k,r,p)$ CP-Uniform LRC also inherits the fault-tolerance of the underlying $(k,r)$ MDS code and therefore tolerates any $r$ arbitrary node failures. However, similar to CP-Azure LRC, it cannot tolerate arbitrary $r+1$ failures. Hence, the minimum distance remains $r+1$. Similarly, it can still correct $r+i$ failures (for $1 \le i \le p$) provided that the $i$ failures are distributed across $i$ distinct groups.

\paragraph{A $(6,2,2)$ CP-Uniform LRC.} Figure~\ref{fig:cp-lrcs}(c) illustrates a $(6,2,2)$ CP-Uniform LRC stripe, which also starts from a $(6,2)$ MDS stripe shown in Figure~\ref{fig:cp-lrcs}(a). Then, we partition the six data blocks and the first global parity block into two local repair groups, i.e., $(D_1, D_2, D_3)$ and $(D_4, D_5, D_6, G_1)$. Following that, for each group, we generate a local parity block according to the coefficients in the appendix. More precisely, we have
\begin{equation}\label{eq:L_1_}
  L_1 = \gamma_1D_1 + \gamma_2D_2 +\gamma_3D_3,
\end{equation}
\begin{equation}
  L_2 = \gamma_4D_4 + \gamma_5D_5 +\gamma_6D_6 + \eta_1G_1,
\end{equation}\label{eq:L_2_}
i.e., the coefficients for $G_2$ are decomposed and spread as evenly as possible across $L_1$ and $L_2$. Again, $L_1 + L_2 = G_2$.

Examples of single-node repair:
(1) If a data block $D_1$ fails, we can repair it by collecting $D_2,D_3,L_1$, and the repair bandwidth is 3 blocks. If the first global parity block $G_1$ fails, we repair it by accessing $D_4, D_5, D_6, L_1$, and the repair bandwidth is four blocks.
(2) If the last global parity block $G_2$ fails, $L_1$ and $L_2$ are utilized for repair within the cascaded group, incurring a repair bandwidth of 2 blocks.
(3) If a local parity block $L_1$ fails, $L_2$ and $G_2$ are utilized for repair, which also incurs a repair bandwidth of 2 blocks.

Examples of multi-node repair:
(1) If each group has no more than one failure, e.g., $D_1, G_2$ fail, we collect $D_2, D_3, L_1, L_2$ for repair; the bandwidth is four blocks.
(2) If there exists more than one failure in one group, e.g., $D_1, D_2, L_2$ fail, we collect $D_3, D_4, D_5, D_6, G_1, G_2$ for repair; the bandwidth is six blocks.

\subsection{Discussion} \label{subsec:design_discussion}

We now revisit the challenges outlined at the beginning of this section and discuss how CP-LRCs address them. A $(k,r,p)$ CP-LRC preserves the fault-tolerance of the underlying $(k,r)$ MDS code, while augmenting it with $p$ local parity blocks whose encoding coefficients are derived by decomposing those of the last global parity block. This enables the cascaded relationship between local and global parity blocks, which is key to achieving both efficient local repair and full MDS resiliency. CP-LRCs enable low-bandwidth single-node repair and employ a 'local-first, global-as-fallback' policy to handle multi-node failures. Under this policy, failures are repaired within their local repair groups or the cascaded group, whenever possible, and resort to global repair only when dictated by the failure pattern. Finally, the construction of CP-LRCs imposes no restrictions on code parameters, making the framework broadly applicable and generalizable.

A key design choice in CP-LRCs is to decompose the encoding coefficients of a global parity block and distribute them across all local parity blocks to form the cascaded group. Extending the cascaded relationship to more global parity blocks would introduce additional linear dependencies among parity equations, reducing distance and thus degrading fault tolerance. Limiting cascaded coupling to one global parity block is therefore essential to preserving the code's MDS-level fault tolerance.

We instantiate our framework with CP-Azure and CP-Uniform because Azure LRC provides strong fault tolerance \cite{hu21} while Uniform Cauchy LRC offers low repair cost \cite{kadekodi23} among existing wide stripe LRCs. However, our CP-LRCs can also be applied atop Azure LRC+1, Optimal Cauchy LRC, and other LRC variants by decomposing coefficients according to their respective encoding structures.

\section{Implementation}

We implement CP-LRCs in a distributed storage system prototype to demonstrate its practical benefits. The LRC approaches rely on Jerasure \cite{plank09}, while our prototype is realized in C++ with around 10,000 lines of code. Section~\ref{sec:system-architecture} details the system architecture, Section~\ref{sec:encoding-decoding} shows the encoding and decoding workflows, Section~\ref{sec:read-repair-optimization} presents our optimizations for read and repair operations, and Section~\ref{sec:metadata-management} gives the metadata management.

\subsection{System Architecture} \label{sec:system-architecture}

\begin{figure}[!t]
  \centering
  \includegraphics[width=0.85\linewidth]{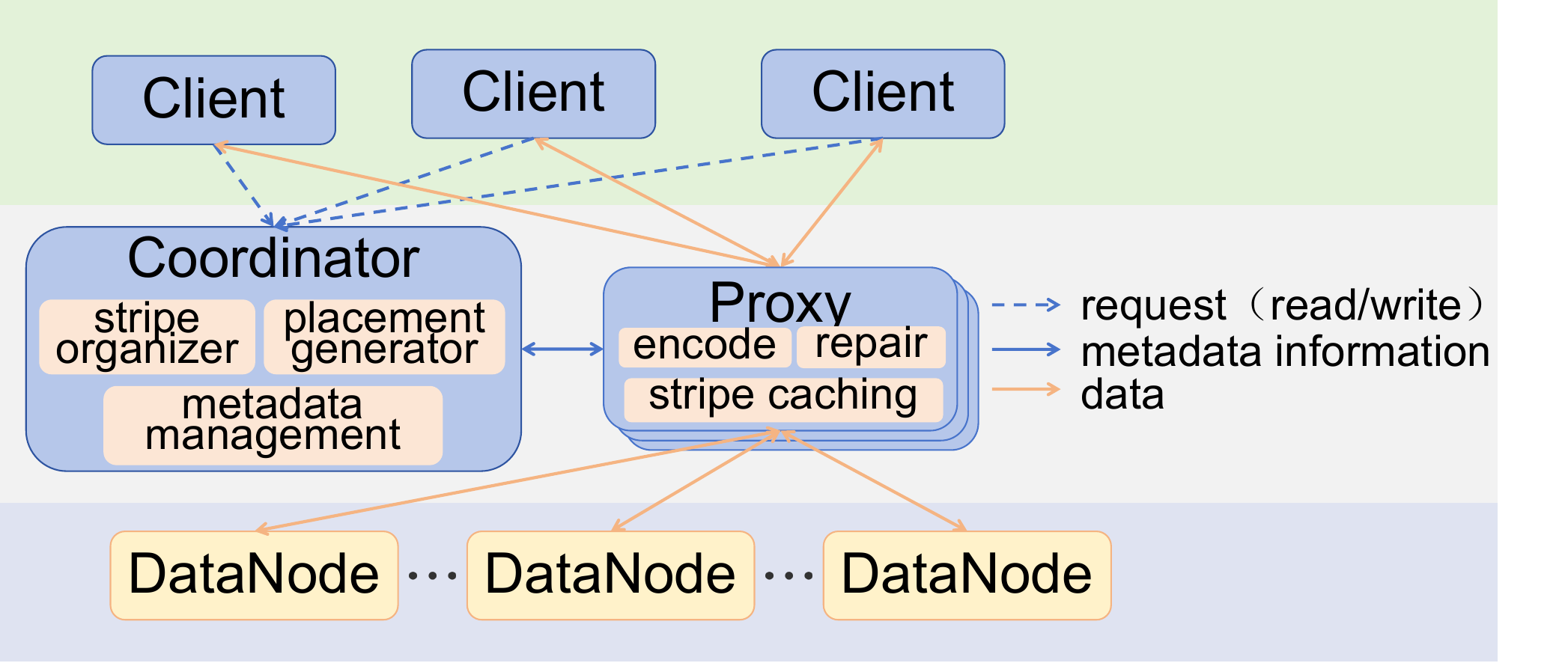}
  \caption{Distributed Prototype System}
  \label{fig:impl_arch}
\end{figure}

As shown in Figure~\ref{fig:impl_arch}, our prototype includes clients, a coordinator, proxies, and multiple data nodes. A client provides APIs for basic file operations such as read and write. The coordinator manages metadata, including stripe information (i.e., how the files constitute a stripe, the stripe-to-block mappings, and the coding parameters) and block information (i.e., the block locations). The proxies perform the actual encoding, decoding, and repair operations. The data nodes store data and parity blocks; each data node is associated with a unique identifier, IP address, and port number.

To flexibly tackle small files, our prototype organizes multiple small files into a single stripe; if the stripe is not fully occupied, we pad it with zeros.

\subsection{Encoding and Decoding Workflows} \label{sec:encoding-decoding}

Encoding is performed along the critical path of write operations, while decoding occurs during degraded read operations. The proxy manages both processes through well-defined, multi-stage workflows designed to ensure efficient data storage and reliable recovery.

Specifically, the encoding process consists of three key stages:
(1) Pre-encoding: Multiple small files are aggregated into a stripe. Each file's metadata records its size, stripe ID, and offset within the stripe.
(2) Parity generation: Based on the coding strategy (e.g., CP-Azure), local and global parity blocks are generated from the data blocks. To maintain uninterrupted data access during encoding, multiple temporary replicas are used to back up data and handle user requests concurrently.
(3) Data storage: Finally, the data and parity blocks are distributed across data nodes for persistent storage.

The decoding process comprises five stages:
(1) Repair triggering: Upon detecting node failures, the proxy initiates the repair workflow.
(2) Metadata retrieval: The proxy requests the corresponding metadata and repair plan from the coordinator. The coordinator determines whether to perform local or global repair according to the repair algorithms.
(3) Data collection: The proxy retrieves data from data nodes, precisely aligning the requested file data within the stripe to minimize read amplification and bandwidth consumption (Section~\ref{sec:read-repair-optimization}).
(4) Failure decoding: The proxy decodes and reconstructs the missing data.
(5) Data delivery: The reconstructed data is returned to the client, completing the degraded read operation.

\subsection{Read and Repair Optimizations} \label{sec:read-repair-optimization}

In wide stripe settings, a single stripe may contain numerous files. Therefore, the conventional block-level I/O access pattern causes significant I/O amplification during read and repair operations. To mitigate such amplification, our implementation performs fine-grained, file-level read and repair.

\begin{figure*}[t]
  \centering
  \begin{subfigure}[t]{0.3\linewidth}
    \centering
    \includegraphics[scale=0.60]{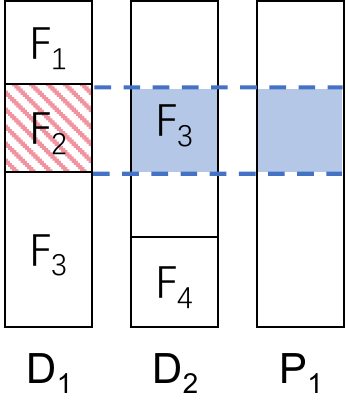}
    \caption{Access $F_2$ when $D_1$ fails}
    \label{fig:single_opt}
  \end{subfigure}
  \begin{subfigure}[t]{0.35\linewidth}
    \centering
    \includegraphics[scale=0.60]{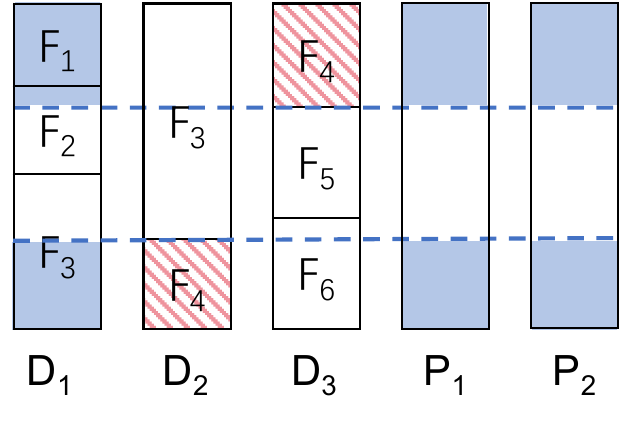}
    \caption{Access $F_4$ when $D_2$ and $D_3$ fail}
    \label{fig:double_repair}
  \end{subfigure}
  \begin{subfigure}[t]{0.3\linewidth}
    \centering
    \includegraphics[scale=0.60]{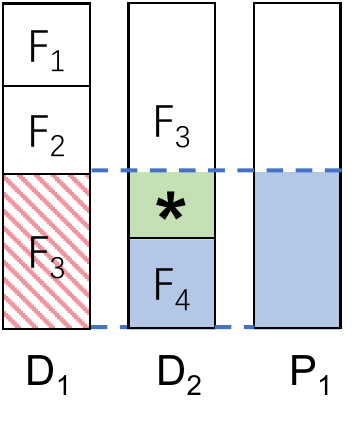}
    \caption{\parbox{\linewidth}{Eliminate repeated-read for accessing $F_3$ when $D_1$ fails. Where ``*'' denotes repeated read.}}
    \label{fig:single_repair_2}
  \end{subfigure}
  \caption{Illustration of file-level repair optimization strategies under a stripe-based layout.
    $D_i$ ($i=1,2,\ldots$) denotes a data block, and $P_i$ represents a parity block. $F_1, F_2, \dots, F_6$ are files (or file fragments) distributed across blocks. The red hatched regions indicate the failed data segments that trigger degraded reads. The blue shaded regions mark the portions of surviving blocks accessed during repair.}
  \label{fig:read_repair_optimization}
\end{figure*}

For scenarios where the requested file data resides within one or more blocks, instead of reading entire blocks, the proxy first obtains the file boundaries from the coordinator and then reads only the required file segments. This ensures that data retrieval precisely matches user requests.

\paragraph{Repair Optimization.} Similarly, during degraded reads, the proxy performs selective data access by reading only the necessary segments from the surviving blocks for decoding. For example, in Figure~\ref{fig:read_repair_optimization}(a), when the client accesses $F_2$ residing in a failed block $D_1$, the proxy avoids fetching entire surviving blocks $D_2$ and $P_1$. Instead, it retrieves only the segments that align with the boundaries of $F_2$, minimizing unnecessary reads. In a more complex case shown in Figure~\ref{fig:read_repair_optimization}(b), file $F_4$ spans two failed blocks $D_2$ and $D_3$. Here, the proxy again reads only the file-aligned portions of surviving blocks required for double-node decoding.

When a file spans multiple blocks, the retrieved segments from surviving blocks may partially overlap with the file's existing data. For example, in Figure~\ref{fig:read_repair_optimization}(c), file $F_3$ spans blocks $D_1$ and $D_2$. If $D_1$ fails, the decoding segment retrieved from $D_2$ overlaps with the portion of $F_3$ already read from $D_2$. In such a case, the proxy eliminates repeated reads by skipping overlapping data and fetching only the non-overlapping portions from surviving blocks. This reduces both network bandwidth and disk read overhead during degraded reads.

\subsection{Metadata Management} \label{sec:metadata-management}

The coordinator maintains metadata using four compact indexes: stripe, block, object, and node indices. The stripe index records each stripe's globally unique stripe\_id, parameters $(k,r,p)$, coding strategy, and the node\_id list for all data blocks, local parity blocks, and global parity blocks within this stripe. The block index records each block's id (using a composite key of stripe\_id plus block index) and a linked list of files stored within this block. The object index stores file-specific metadata, including the file's id, size, associated stripe\_id, block indexes, and offsets within blocks. The node index tracks physical nodes, recording each node's node\_id, IP, port, and liveness status. This hierarchical indexing enables flexible mapping from logical objects to physical blocks and supports efficient read/repair coordination with minimal metadata overhead.

To evaluate metadata storage cost, consider a 100 GB dataset with a block size of 2 MB and parameters $(n, k) = (8, 6)$, where the average file size is 128 KB. As the stripe index requires 128  bytes per stripe, the block index 64 bytes per block, and the object index 32 bytes per object, the total metadata size for these three indexes amounts to approximately $1.04 + 4.36 + 25.00 = 30.4$ MB. This corresponds to roughly 0.03\% of the total data volume. Among these, the object index dominates the metadata footprint due to the large number of small files. The node index remains extremely lightweight (only several KB for $n=8$), yet plays a critical role in tracking node states and network locations. Overall, the lightweight hash-index design achieves low memory overhead.

\section{Experiments} \label{sec:evaluation}

We evaluate our proposed CP-LRCs through both theoretical analysis and cloud evaluation. We first theoretically compare CP-LRCs with existing wide stripe LRCs using the metrics defined in Section~\ref{sec:wide-lrc}. Then, we deploy our distributed storage prototype in Alibaba Cloud and evaluate the actual cloud performance of CP-LRCs under various failure scenarios.

\paragraph{Summary of results.}
From theoretical analysis, CP-LRCs reduce the $\text{ARC}_1$ and $\text{ARC}_2$ of the baseline LRCs by up to 47.5\% and 19.9\%, respectively, and improve the MTTDL of the baselines by up to 105.3\%. From cloud experiments, CP-LRCs reduce the single-node and two-node repair times of the baseline LRCs by up to 41\% and 26\%, respectively. The file-level repair optimization strategies cut the degraded read latency by an average of 58.6\% for small-size files.

\subsection{Theoretical Analysis} \label{sec:theoretical-evaluation}

\subsubsection{Repair Bandwidth} \label{sec:repair-bandwidth}

\begin{table}[!t]
    \centering
    \caption{Evaluation Parameters.}
    \begin{tabular}{cccccr}
        \toprule
        \textbf{Label} & $k$ & $r$ & $p$ & \textbf{Code Rate} ($k/(k+r+p)$) \\
        \midrule
        P1             & 6   & 2   & 2   & 6/10 = 0.600                     \\
        P2             & 12  & 2   & 2   & 12/16 = 0.750                    \\
        P3             & 16  & 3   & 2   & 16/21 = 0.762                    \\
        P4             & 20  & 3   & 5   & 20/28 = 0.714                    \\
        P5             & 24  & 2   & 2   & 24/28 = 0.857                    \\
        P6             & 48  & 4   & 3   & 48/55 = 0.873                    \\
        P7             & 72  & 4   & 4   & 72/80 = 0.900                    \\
        P8             & 96  & 5   & 4   & 96/105 = 0.914                   \\
        \bottomrule
    \end{tabular}
    \label{tab:parameters}
\end{table}

We compare the ADRC, $\text{ARC}_1$, $\text{ARC}_2$ values of CP-LRCs with Azure LRC, Azure LRC+1, Optimal Cauchy LRC, Uniform Cauchy LRC under eight sets of parameters $(k, r, p)$ (denoted P1-P8 as shown in Table~\ref{tab:parameters}). Note that P5, P6, P7, and P8 are wide stripe parameters adopted in Google \cite{kadekodi23}. Table~\ref{tab:all_schemes_comparison} shows the results. We summarize the observations below.
\begin{itemize}[leftmargin=*]
    \item ADRC: CP-Azure achieves the smallest ADRC as Azure, while CP-Uniform lowers ADRC over Uniform due to the benefit of shortening the size of original local repair groups. Also, both CP-Azure and CP-Uniform consistently outperform Azure LRC+1 and Optimal LRC baselines.
    \item ARC$_1$: CP-Azure and CP-Uniform have the smallest and second smallest ARC$_1$ across all parameters due to the low cost of parity repair. For example, under $(6, 2, 2)$, CP-Azure and CP-Uniform reduce ARC$_1$ of other LRCs by 16.7\%-40\% and 13.9\%-38\%, respectively; under $(24, 2, 2)$, CP-Azure and CP-Uniform reduce ARC$_1$ of other LRCs by 11.7\%-47.5\% and 11.4\%-47.4\%.
    \item ARC$_2$: Under double-node failures, CP-Azure and CP-Uniform also have the smallest and second smallest ARC$_2$. The reason is that CP-LRCs are more likely to perform local repairs than global repairs. CP-LRCs consistently outperform other LRCs under both narrow-stripe and wide-stripe parameters.
\end{itemize}

\begin{table}[!t]
    \centering
    \caption{A comprehensive comparison of theoretical repair costs across different LRC constructions.}
    \label{tab:all_schemes_comparison}
    \small
    \setlength{\tabcolsep}{3pt}
    \resizebox{\columnwidth}{!}{%
        \begin{tabular}{@{}l*{8}{c}@{}}
            \toprule
            \textbf{parameters} & P$_1$         & P$_2$          & P$_3$          & P$_4$          & P$_5$          & P$_6$          & P$_7$          & P$_8$          \\
            \midrule
            \multicolumn{9}{c}{\textbf{Average Degraded Read Cost (ADRC)}}                                                                                             \\
            \midrule
            Azure LRC           & \textbf{3.00} & \textbf{6.00}  & \textbf{8.00}  & \textbf{4.00}  & \textbf{12.00} & \textbf{16.00} & \textbf{18.00} & \textbf{24.00} \\
            Azure LRC+1         & 6.00          & 12.00          & 16.00          & 5.00           & 24.00          & 24.00          & 24.00          & 32.00          \\
            Optimal LRC         & 5.00          & 8.00           & 10.00          & 7.00           & 14.00          & 20.00          & 22.00          & 29.00          \\
            Uniform LRC         & 4.00          & 7.00           & 9.50           & 4.60           & 13.00          & 17.29          & 19.00          & 25.22          \\
            CP-Azure            & \textbf{3.00} & \textbf{6.00}  & \textbf{8.00}  & \textbf{4.00}  & \textbf{12.00} & \textbf{16.00} & \textbf{18.00} & \textbf{24.00} \\
            CP-Uniform          & 3.50          & 6.50           & 9.00           & 4.40           & 12.50          & 17.00          & 18.75          & 25.00          \\
            \midrule
            \multicolumn{9}{c}{\textbf{Average Single-node Repair Cost (ARC$_1$)}}                                                                                     \\
            \midrule
            Azure LRC           & 3.60          & 6.75           & 9.14           & 5.71           & 12.86          & 18.33          & 20.70          & 27.43          \\
            Azure LRC+1         & 4.80          & 10.13          & 13.52          & 4.71           & 21.64          & 22.18          & 22.75          & 30.46          \\
            Optimal LRC         & 5.00          & 8.00           & 11.00          & 7.00           & 13.00          & 20.00          & 22.00          & 29.00          \\
            Uniform LRC         & 4.00          & 7.00           & 9.52           & 4.64           & 13.00          & 17.35          & 19.00          & 25.22          \\
            CP-Azure            & \textbf{3.00} & \textbf{5.63}  & \textbf{7.90}  & 5.36           & \textbf{11.36} & 16.80          & 19.15          & 25.79          \\
            CP-Uniform          & 3.10          & 5.68           & 8.00           & \textbf{4.57}  & 11.39          & \textbf{15.98} & \textbf{17.84} & \textbf{24.00} \\
            \midrule
            \multicolumn{9}{c}{\textbf{Average Two-node Repair Cost (ARC$_2$)}}                                                                                        \\
            \midrule
            Azure LRC           & 6.00          & 12.00          & 16.00          & 12.06          & 24.00          & 38.66          & 47.32          & 63.03          \\
            Azure LRC+1         & 6.22          & 12.02          & 16.04          & 11.24          & 24.07          & 44.63          & 52.54          & 70.43          \\
            Optimal LRC         & 6.27          & 12.46          & 16.22          & 12.26          & 25.17          & 39.35          & 47.06          & 62.62          \\
            Uniform LRC         & 6.22          & 12.02          & 16.01          & 11.11          & 24.07          & 38.96          & 46.18          & 61.56          \\
            CP-Azure            & \textbf{5.47} & \textbf{10.68} & \textbf{14.30} & \textbf{10.63} & \textbf{21.82} & \textbf{35.73} & 43.88          & 59.43          \\
            CP-Uniform          & 5.80          & 10.99          & 14.37          & 10.64          & 22.03          & 35.86          & \textbf{42.98} & \textbf{58.15} \\
            \bottomrule
        \end{tabular}
    }
\end{table}

\subsubsection{Portion of Local Repair in Multi-Node Failures} \label{sec:local-repair-portion}
To quantify how effectively CP-LRCs enhance local repair under multi-node failures, we define the {\em portion of local repair} as the fraction of failure cases that can be repaired within local repair groups out of all multi-node failure combinations. Specifically, for the case of two-node failures, we enumerate all possible pairs of failed nodes and determine, for each pair, whether the lost blocks can be repaired using local repair groups or require global repair.

Table~\ref{tab:portion-local-repair} shows the portion of local repair under two-node failures across various LRCs. As shown, CP-Uniform consistently achieves the highest local repair portion, demonstrating its strong resilience to multi-node failures. CP-Azure also consistently outperforms traditional Azure and Azure+1. Under narrow stripe parameters, CP-Azure surpasses both the Optimal and Uniform schemes, indicating that locality is more impactful when each stripe contains fewer blocks. However, as the stripe width increases, CP-Azure’s advantage diminishes, as wider stripes incur a higher global repair possibility that mitigates locality benefits.

\begin{table}[t]
    \centering
    \caption{Portion of local repair under two-node failures.}
    \label{tab:portion-local-repair}
    \resizebox{\linewidth}{!}{
        \begin{tabular}{lcccccccc}
            \toprule
            \textbf{Method} & P$_1$         & P$_2$         & P$_3$         & P$_4$         & P$_5$         & P$_6$         & P$_7$         & P$_8$         \\
            \midrule
            Azure LRC       & 0.36          & 0.41          & 0.39          & 0.66          & 0.45          & 0.58          & 0.67          & 0.69          \\
            Azure LRC+1     & 0.47          & 0.33          & 0.32          & \textbf{0.83} & 0.20          & 0.59          & 0.71          & 0.71          \\
            Optimal LRC     & 0.62          & 0.61          & 0.62          & 0.82          & 0.57          & 0.71          & 0.78          & 0.77          \\
            Uniform LRC     & 0.56          & 0.53          & 0.52          & \textbf{0.83} & 0.52          & 0.70          & 0.76          & 0.76          \\
            CP-Azure        & 0.67          & 0.63          & 0.55          & 0.78          & 0.58          & 0.65          & 0.73          & 0.72          \\
            CP-Uniform      & \textbf{0.80} & \textbf{0.70} & \textbf{0.66} & \textbf{0.83} & \textbf{0.62} & \textbf{0.75} & \textbf{0.79} & \textbf{0.78} \\
            \bottomrule
        \end{tabular}}
\end{table}

In some scenarios, local repair may cost more than global repair, since it requires accessing both data blocks and global parity blocks. In other cases, local repair incurs the same cost as global repair and provides no advantage. To distinguish actual benefits, we define the {\em portion of effective local repair} as the fraction of multi-node failures where local repair achieves strictly lower cost than global repair.

Table~\ref{tab:optimized-local-repair} summarizes the results. Here, conventional LRCs exhibit nearly zero effective local repair capability under narrow-stripe settings, indicating that most failures must trigger global repair. In contrast, CP-LRCs maintain 20\%-50\% effective local repair coverage even in such restrictive settings, and this advantage further grows in wide stripe scenarios.

\begin{table}[t]
    \centering
    \caption{Portion of effective local repair (cost lower than global repair) under two-node failures.}
    \label{tab:optimized-local-repair}
    \resizebox{\linewidth}{!}{
        \begin{tabular}{lcccccccc}
            \toprule
            \textbf{Method} & P$_1$         & P$_2$         & P$_3$         & P$_4$         & P$_5$         & P$_6$         & P$_7$         & P$_8$         \\
            \midrule
            Azure LRC       & \gray{0.00}   & \gray{0.00}   & \gray{0.00}   & 0.66          & \gray{0.00}   & 0.58          & 0.67          & 0.69          \\
            Azure LRC+1     & \gray{0.00}   & \gray{0.00}   & \gray{0.00}   & 0.83          & \gray{0.00}   & 0.17          & 0.71          & 0.71          \\
            Optimal LRC     & \gray{0.00}   & \gray{0.00}   & \gray{0.00}   & 0.82          & \gray{0.00}   & 0.71          & 0.78          & 0.77          \\
            Uniform LRC     & \gray{0.00}   & \gray{0.00}   & \gray{0.00}   & 0.83          & \gray{0.00}   & 0.70          & 0.76          & 0.76          \\
            CP-Azure        & 0.47          & 0.33          & 0.24          & 0.78          & 0.20          & 0.73          & 0.73          & 0.72          \\
            CP-Uniform      & \textbf{0.53} & \textbf{0.35} & \textbf{0.27} & \textbf{0.83} & \textbf{0.21} & \textbf{0.79} & \textbf{0.79} & \textbf{0.78} \\
            \bottomrule
        \end{tabular}}
\end{table}

Overall, these results confirm that CP-LRCs significantly increase the local repair probability under multi-node failures. This improvement directly translates to reduced ARC$_2$ and shorter degraded durations, which collectively contribute to enhanced system reliability and higher MTTDL, as further analyzed in the next section.

\subsubsection{MTTDL Analysis} \label{sec:mttdl-analysis}

Wide stripe LRCs generally lower system reliability (MTTDL) due to increased repair bandwidth and a higher probability of data loss. In contrast, CP-LRCs enhance reliability by simultaneously lowering repair costs and increasing the likelihood of local repair in multi-failure scenarios. We now compute the MTTDL values for all wide stripe LRC schemes across various parameters. Table~\ref{tab:mttdl-results-transposed-single} presents the results, where a larger MTTDL indicates higher reliability.

\begin{table}[t]
    \centering
    \caption{MTTDL comparison across different LRC constructions.}
    \label{tab:mttdl-results-transposed-single}
    \setlength{\tabcolsep}{5pt}
    \renewcommand{\arraystretch}{1.1}
    \resizebox{\linewidth}{!}{%
        \begin{tabular}{lcccccccc}
            \toprule
            \textbf{Method} & \textbf{P$_1$}   & \textbf{P$_2$}   & \textbf{P$_3$}   & \textbf{P$_4$}   & \textbf{P$_5$}   & \textbf{P$_6$}   & \textbf{P$_7$}   & \textbf{P$_8$}   \\
            \midrule
            Azure LRC       & 2.66e17          & 4.67e11          & 1.62e14          & 3.05e27          & 1.90e14          & 1.38e21          & 2.50e22          & 5.32e23          \\
            Azure LRC+1     & 1.99e17          & 3.11e11          & 1.09e14          & 3.70e27          & 1.13e14          & 1.14e21          & 2.28e22          & 4.79e23          \\
            Optimal LRC     & 1.91e17          & 3.94e11          & 1.35e14          & 2.49e27          & 1.89e14          & 1.15e21          & 2.36e22          & 5.04e23          \\
            Uniform LRC     & 2.39e17          & 4.50e11          & 1.56e14          & 3.75e27          & 1.89e14          & 1.46e21          & 2.73e22          & 5.79e23          \\
            CP-Azure        & \textbf{3.19e17} & \textbf{5.60e11} & \textbf{1.88e14} & 3.25e27          & 2.16e14          & 1.50e21          & 2.71e22          & 5.66e23          \\
            CP-Uniform      & 3.09e17          & 5.55e11          & 1.85e14          & \textbf{3.81e27} & \textbf{2.32e14} & \textbf{1.58e21} & \textbf{3.12e22} & \textbf{6.55e23} \\
            \bottomrule
        \end{tabular}%
    }
\end{table}

Across all configurations, CP-Azure and CP-Uniform consistently achieve the highest and second-highest MTTDL. For example, under $(6,2,2)$, CP-Azure and CP-Uniform improve the MTTDL of other LRCs by 19.9\%-67\% and 16.2\%-61.8\%; under $(24,2,2)$, CP-Azure and CP-Uniform improve the MTTDL of other LRCs by 13.7\%-91.2\% and 22.1\%-105.3\%, respectively.

\subsection{Cloud Evaluation} \label{sec:practical-evaluation}

\subsubsection{Setup} \label{sec:setup}

We conduct experiments on Alibaba Cloud \cite{alibaba25}. We deploy our prototype on 18 instances in different zones in Beijing. Specifically, we deploy one \texttt{ecs.g6.8xlarge} instance in Zone I to act as the proxy. This instance is equipped with 32 vCPU and 128 GiB memory and runs Ubuntu 24.04 64-bit. We deploy 2 \texttt{ecs.r6e.xlarge} instances in Zone J to act as the client and coordinator, respectively.We deploy 15 \texttt{ecs.r6e.xlarge} instances in Zones K, J, and L to act as the datanodes. Each instance is equipped with 4 vCPU and 32 GiB of memory and runs Ubuntu 24.04 64-bit.

We adopt the same sets of parameters as in our analysis (i.e., P1-P8 in Table~\ref{tab:parameters}). The default configurations are set as follows: (1) Coding parameters: P5, i.e., (24, 2, 2); (2) Block size: 64 MB; (3) Bandwidth: 1 Gbps; (4) Stripe number: 10, such that the total storage volume is around 20 GB.

We measure both the single-node and two-node repair time. For single-node failures, we repair the failed block in each stripe in turn and record the average repair time. For two-node failures, we randomly trigger 10 patterns of two failed blocks per stripe and apply the same pattern to each LRC scheme; we then compute the average repair time. We average the results of each experiment over ten runs.

\subsubsection{Experiment 1: Single-Node Repair Under Different Parameters} \label{sec:exp1-single-repair}

\begin{figure*}[htbp]
    \centering
    \includegraphics[width=0.95\textwidth]{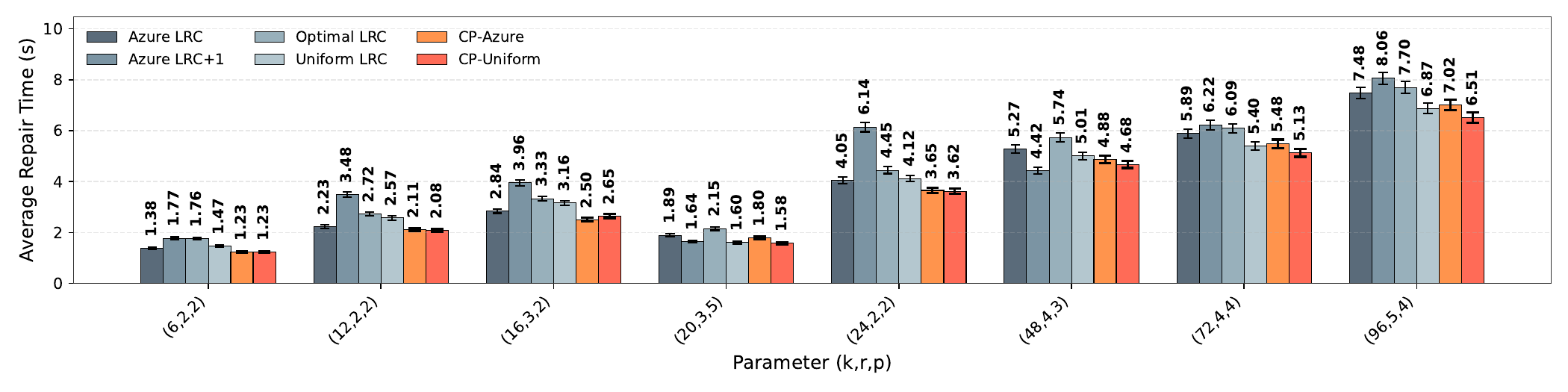}
    \caption{Experiment 1: Single Node Repair Under Different Parameters (P1, P2, $\dots$, P8). We plot the average repair time (in seconds) with error bars for single-node failures within a stripe. CP-Azure and CP-Uniform correspond to the smallest and second-smallest repair time across all settings. }
    \label{fig:exp_1_1}
\end{figure*}

In this experiment, we measure the repair time consumption under different parameters when a single block fails. The parameter is varied from P1 to P8 (Table~\ref{tab:parameters}), and the block size and network bandwidth are set to their default values. The results are shown in Figure~\ref{fig:exp_1_1}.

From Figure~\ref{fig:exp_1_1}, CP-LRCs consistently reduce the single-node repair time of traditional wide stripe LRCs across all parameter settings. For example, under $(12, 2, 2)$, CP-Azure and CP-Uniform reduce the single-node repair time of other LRCs by 5.4\%-39.4\% and 6.7\%-40.2\%, respectively; under $(24, 2, 2)$, CP-Azure LRC and CP-Uniform LRC reduce the repair time of other LRCs by 9.9\%-40.6\% and 10.6\%-41\%. The performance gains are particularly pronounced in wide stripe configurations, where CP-LRCs achieve up to 41\% reduction in repair time compared to Azure LRC+1 (P5). The results validate our theoretical analysis (Section~\ref{sec:repair-bandwidth}) and demonstrate that CP-LRCs effectively reduce the repair bandwidth consumption, leading to faster repair time.

\subsubsection{Experiment 2: Single-Node Repair Under Different Block Sizes} \label{sec:exp2-block-size}

\begin{figure*}[htbp]
    \centering
    \includegraphics[width=0.95\textwidth]{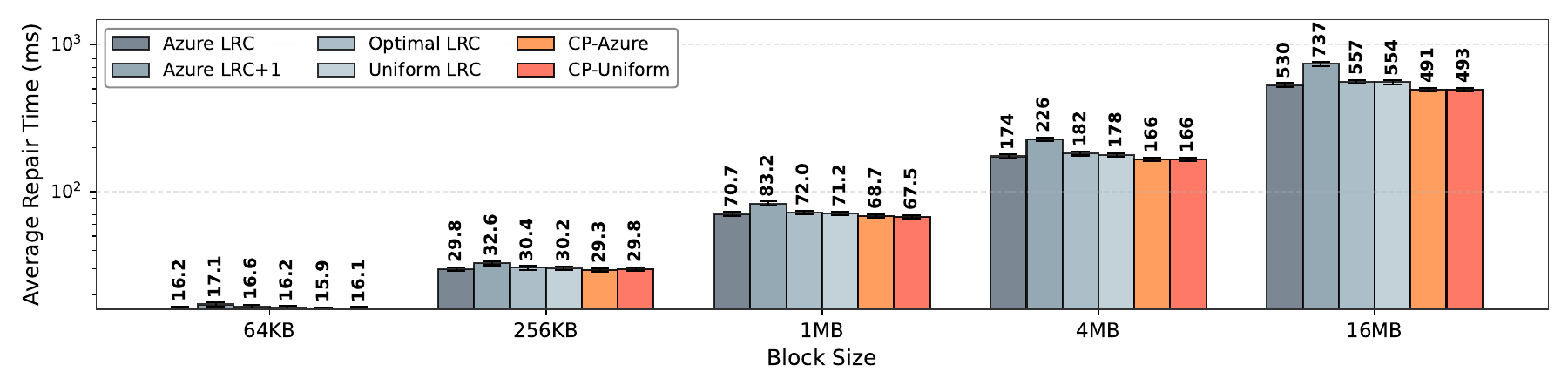}
    \caption{Experiment 2: Single node repair under different block sizes (64 KB to 16 MB). We plot the average repair time (in milliseconds) with error bars for single-node failures within a stripe. CP-Azure and CP-Uniform have the smallest and second-smallest repair times, respectively.}
    \label{fig:exp_2}
\end{figure*}

\begin{figure}
    \centering
    \includegraphics[width=0.9\linewidth]{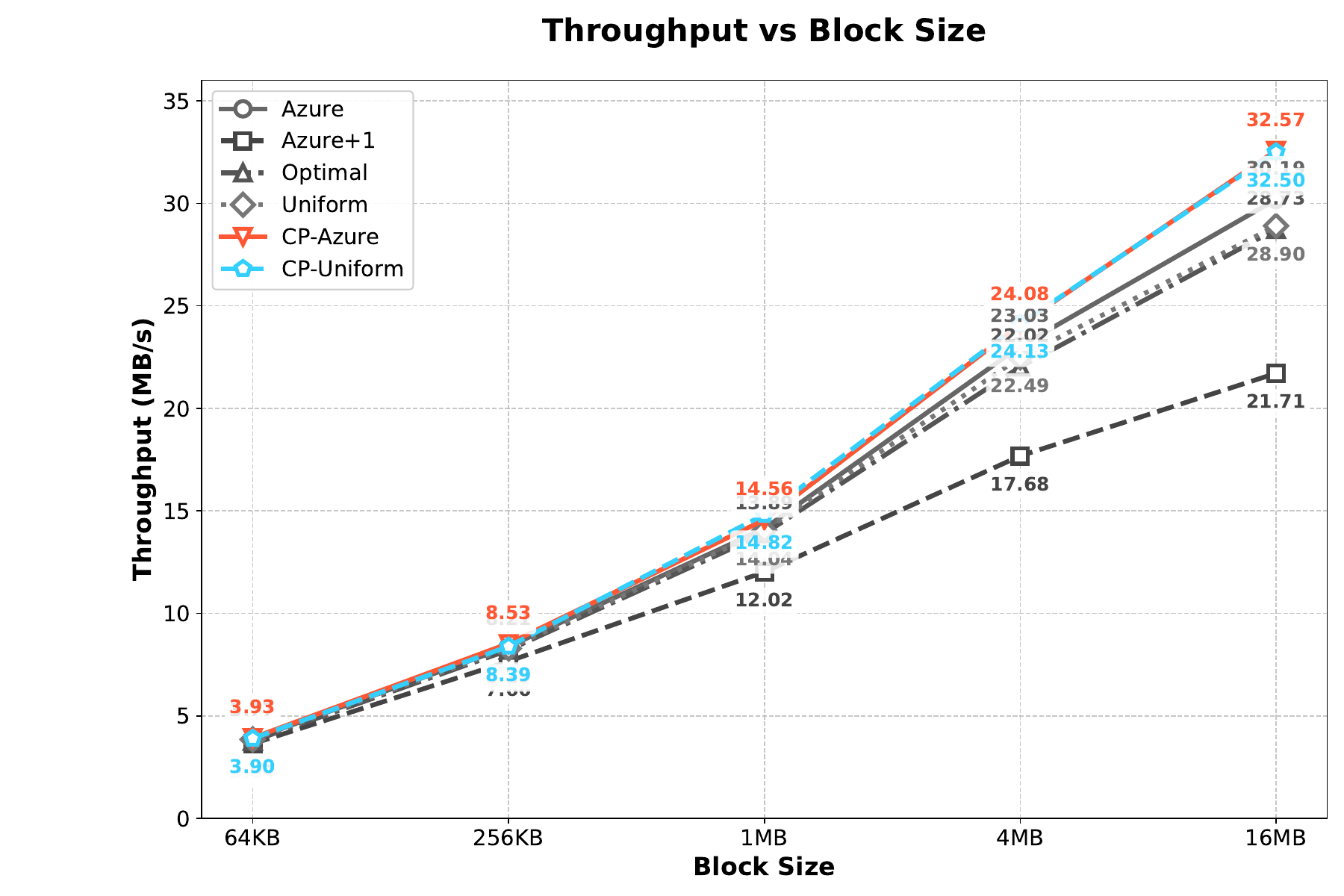}
    \caption{Single-node repair throughput (in MB/s) of different LRCs with varied block sizes (64 KB to 16 MB).}
    \label{fig:single-repair}
\end{figure}

We further investigate the impact of block size on single-node repair efficiency. We vary the block size from 64 KB to 16 MB and measure repair time, with the default values for parameters,  network bandwidth, and stripe number. Figure~\ref{fig:exp_2} shows the results. The repair time of all LRC schemes increases with the block size, as a larger data volume must be transferred over the network for repair. Nevertheless, CP-Azure and CP-Uniform consistently outperform all other LRCs: CP-Azure LRC reduces the single-node repair time of Azure LRC, Azure LRC+1, Optimal LRC, and Uniform LRC by 1.9\%-7.4\%, 7\%-33.3\%, 4.2\%-11.8\%, and 1.9\%-11.3\%. CP-Uniform LRC reduces the single-node repair time of Azure LRC, Azure LRC+1, Optimal LRC, and Uniform LRC by 0.6\%-7.0\%, 5.8\%-33.3\%, 3\%-11.5\%, and 0.6\%-11.0\%. CP-LRCs achieve larger gains with a larger block size.

Figure~\ref{fig:single-repair} further shows the single-node repair throughput. All LRC schemes achieve higher throughput with a larger block size, while CP-Azure and CP-Uniform consistently achieve the highest and second-highest throughput. CP-Azure LRC improves the single-node repair throughput of Azure LRC, Azure LRC+1, Optimal LRC, and Uniform LRC by 1.6\%-7.9\%, 8.0\%-50.0\%, 3.9\%-13.4\%, and 2.1\%-12.7\%. CP-Uniform LRC improves the single-node repair throughput of Azure LRC, Azure LRC+1, Optimal LRC, and Uniform LRC by 0.1\%-7.6\%, 7.1\%-49.8\%, 2.2\%-13.1\%, and 1.2\%-12.5\%.

\subsubsection{Experiment 3: Multi-Node Repair Under Different Parameters} \label{sec:exp3-multi-repair}

We now measure the repair time consumption under different parameters when two blocks fail simultaneously within a stripe. We adopt P1-P8, and the block size, network bandwidth, and stripe number are set as default. Figure~\ref{fig:double-repair_1} shows the results.

\begin{figure*}[htbp]
    \centering
    \includegraphics[width=0.95\textwidth]{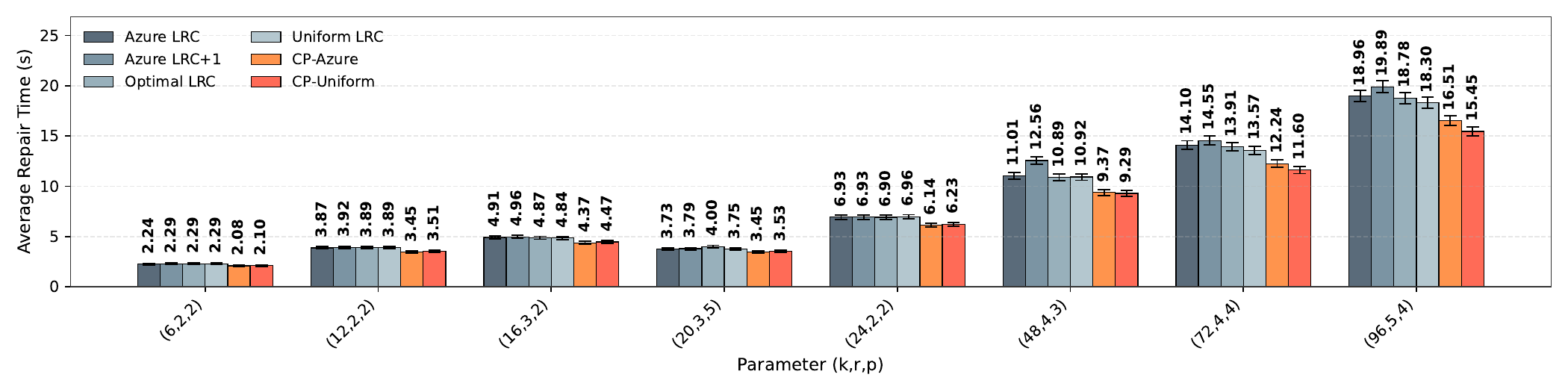}
    \caption{Experiment 3: Double Node Repair Under Different Parameters (P1, P2, $\dots$, P8). We plot the average repair time (in seconds) with error bars for two-node failures within a stripe. CP-Azure and CP-Uniform have the smallest and second-smallest repair times across all settings. }
    \label{fig:double-repair_1}
\end{figure*}

From Figure~\ref{fig:double-repair_1}, CP-LRCs consistently reduce the two-node repair time of traditional wide stripe LRCs across all parameter settings. For example, under $(12, 2, 2)$, CP-Azure and CP-Uniform reduce the two-node repair time of other LRCs by 10.9\%-12\% and 9.3\%-10.5\%, respectively; under $(48, 4, 3)$, CP-Azure LRC and CP-Uniform LRC reduce the repair time of other LRCs by 14.2\%-25.4\% and 14.9\%-26\%. Once again, CP-LRCs achieve larger gains under wide stripe configurations; for example, CP-LRCs achieve up to 26\% reduction in two-node repair time compared to Azure LRC+1 (P6). From our analysis, CP-LRCs increase the proportion of effective local repair compared to other LRCs. To validate this analysis, we further examine the effective local repair cases (where local repair is less costly than global repair) in this experiment. Under P6, CP-Azure LRC has 0.73 effective local repair cases, CP-Uniform LRC has 0.79, while Azure LRC, Azure LRC+1, Optimal LRC, and Uniform LRC have 0.58, 0.17, 0.71, and 0.7, respectively.

\subsubsection{Experiment 4: File-Level Repair Optimization Under Real-World Traces} \label{sec:exp4-trace}

\begin{figure}
    \centering
    \includegraphics[width=0.9\linewidth]{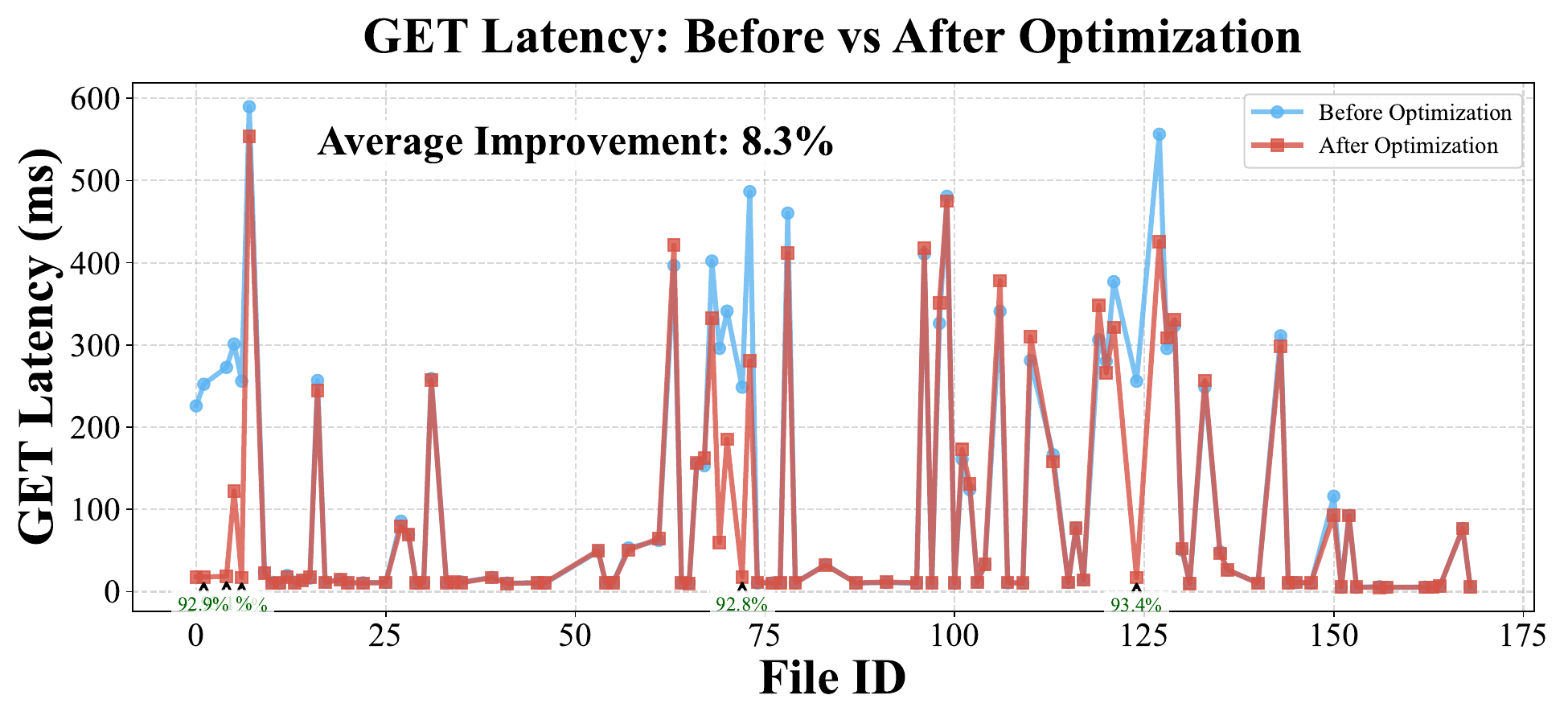}
    \caption{Experiment 4: Repair optimization under Facebook trace.}
    \label{fig:degraded-read}
\end{figure}

In this experiment, we test our file-level repair optimization for mitigating I/O amplification during degraded reads (Section~\ref{sec:read-repair-optimization}). We deploy the {\em FB-2010} trace, a large-scale production workload from Facebook comprising 30 million jobs processing 1.5 EB of data over 1.5 months \cite{chen12}. We randomly sample 100 files with sizes ranging from 5 KB to 30 MB, and encode them using Azure LRC with a block size of 16 MB. During trace execution, we simulate single-node failures and measure the degraded (or normal) read latency for each file. Figure~\ref{fig:degraded-read} shows the results.

From Figure~\ref{fig:degraded-read}, our file-level optimization effectively reduces the degraded read latency by alleviating unnecessary block-level reads. Across all files, the optimization reduces the average degraded read time from 120.4 ms to 96.5 ms, resulting in a 19.8\% improvement. The benefit is more pronounced for small files (size $<$ 1 MB), where the latency drops from 41.1 ms to 17.0 ms, a 58.6\% reduction. For medium and large files, the optimization yields improvements of 19.5\% and 5.6\%, respectively.

\section{Related Work}

\subsection{New Constructions of Locally Repairable Codes}

Locally Repairable Codes (LRCs) have emerged as a critical mechanism for reducing the high repair bandwidth in erasure-coded storage systems. The theoretical work by Tamo and Barg~\cite{tamo14} established a family of LRCs that achieve the Singleton-type bound with minimum locality, providing a basis for subsequent practical designs. Recent industrial efforts have developed new LRC variants that balance repair efficiency, fault tolerance, and deployment practicality. For example, Kolosov et al. \cite{kolosov20} developed several new LRCs (e.g., Azure LRC+1) and deployed them in Ceph across multiple Availability Zones. Google's work \cite{kadekodi23} designed Optimal and Uniform Cauchy LRCs with balanced fault tolerance and uniform repair bandwidth, while also enabling maintenance-robust-efficient placement strategies. Concurrently, the studies \cite{wu2020, hu21} explored hierarchical repair structures and parity layouts to minimize the cross-rack repair bandwidth. Our CP-LRCs advance current research by introducing a cascaded parity design that enhances repair efficiency and system reliability.

\subsection{Wide Stripe Erasure Coding}

Wide-stripe erasure codes are promising strategies for hyperscale data centers due to their ultra-low-cost fault tolerance. In industry, Backblaze \cite{backblaze-erasure-coding}, Vastdata \cite{vastdata-resilience}, and Google \cite{kadekodi23} reported adopting wide stripe designs. In academia, the study \cite{li2017facilitating} used erasure codes with a width of 1024 on hard disks to reduce frequent read retries, while the study \cite{haddock2019high} accelerated wide stripe computation using GPUs. To reduce the cross-rack repair cost, Hu et al. \cite{hu21} utilized Azure's LRC and hierarchical placement, and Yang et al. \cite{yang2022xhr} utilized XOR-Hitchhiker-RS Codes. Kadekodi et al.~\cite{kadekodi23} visited the reliability of wide LRCs and quantitatively analyzed the trade-offs among stripe width, locality, and Mean Time to Data Loss (MTTDL). Several recent studies further explored how to efficiently generate wide stripes from initial narrow stripes \cite{yao2021stripemerge, wu2024optimal}. Our work shows the performance and reliability inefficiencies of existing wide stripe schemes and proposes CP-LRCs that enhance both single- and multi-node repair performance while guaranteeing reliability.

\subsection{System-Level Optimizations for Erasure-Coded Storage}

Prior work has explored a variety of system-level optimizations that are orthogonal to the underlying erasure code structure (including our CP-LRCs). Okapi \cite{athlur25} decouples data striping from redundancy grouping to optimize access performance and reliability jointly. ECPipe \cite{li21} proposed pipelined repair, and subsequent work \cite{wang22} further accelerated repair via parallel disk reconstruction in high-density storage servers. Deterministic data placement has also been leveraged to improve repair performance \cite{li19}. Another line of research focuses on rack-aware repair, aiming to reduce cross-rack bandwidth for single-block repair in Regenerating Codes \cite{hu17, shen2017cross}, RS Codes \cite{hu17, shen2017cross}, and Azure's LRC \cite{wu2020, hu21}. Extending CP-LRCs with rack-awareness remains an interesting direction for future work. Besides, several studies investigate file-level degraded-read optimization for Regenerating Codes \cite{shan2021geometric, niu2026drboost}. Our file-level repair optimization is conceptually similar, but differs in two key aspects: (i) we optimize not only single-node but also multi-node degraded reads, and (ii) we explicitly address repeated-read amplification.

\section{Conclusion}

We introduced Cascaded Parity LRCs (CP-LRCs), a new LRC family that makes wide stripes practical by coupling local and global parity blocks through a cascaded structure. CP-LRCs preserve MDS-level fault tolerance while significantly reducing bandwidth for both single-node and multi-node repairs. We developed a general construction framework, efficient repair algorithms, and practical instantiations (CP-Azure and CP-Uniform). Cloud evaluations show substantial repair-time reductions, demonstrating that structured parity cooperation is an effective design principle for scalable, reliable wide stripe storage.

\bibliographystyle{ieeetr}
\bibliography{ref}

\begin{IEEEbiography}[{\includegraphics[width=1in,height=1.25in,clip,keepaspectratio]{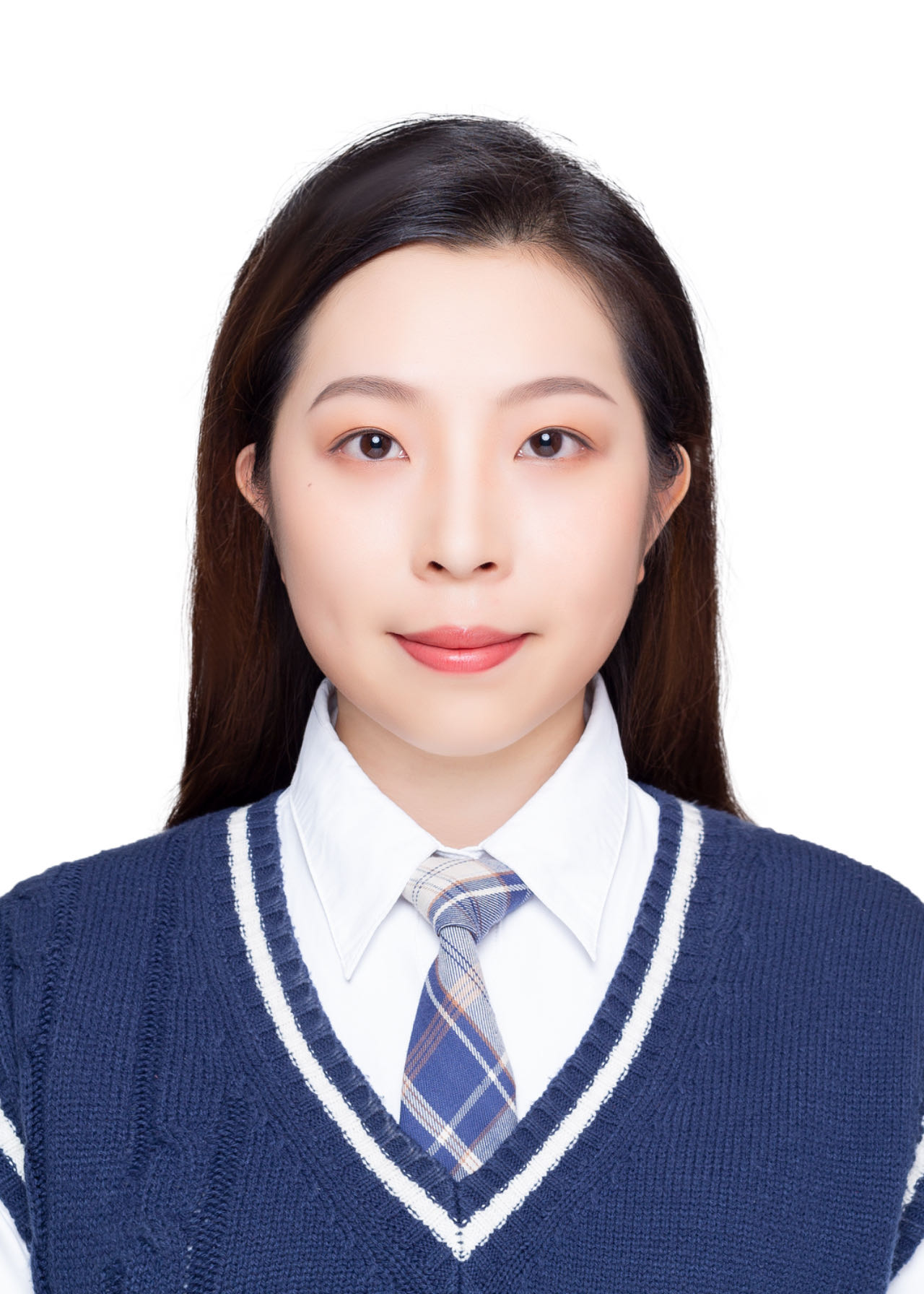}}]
  {Fan Yu} received the B.Eng. degree in Software Engineering from Shandong University in 2023. She is now a Ph.D. student in the School of Cyberspace Information and Technology at Shandong University. Her research interests include distributed systems and erasure coding.
\end{IEEEbiography}

\begin{IEEEbiography}[{\includegraphics[width=1in,height=1.25in,clip,keepaspectratio]{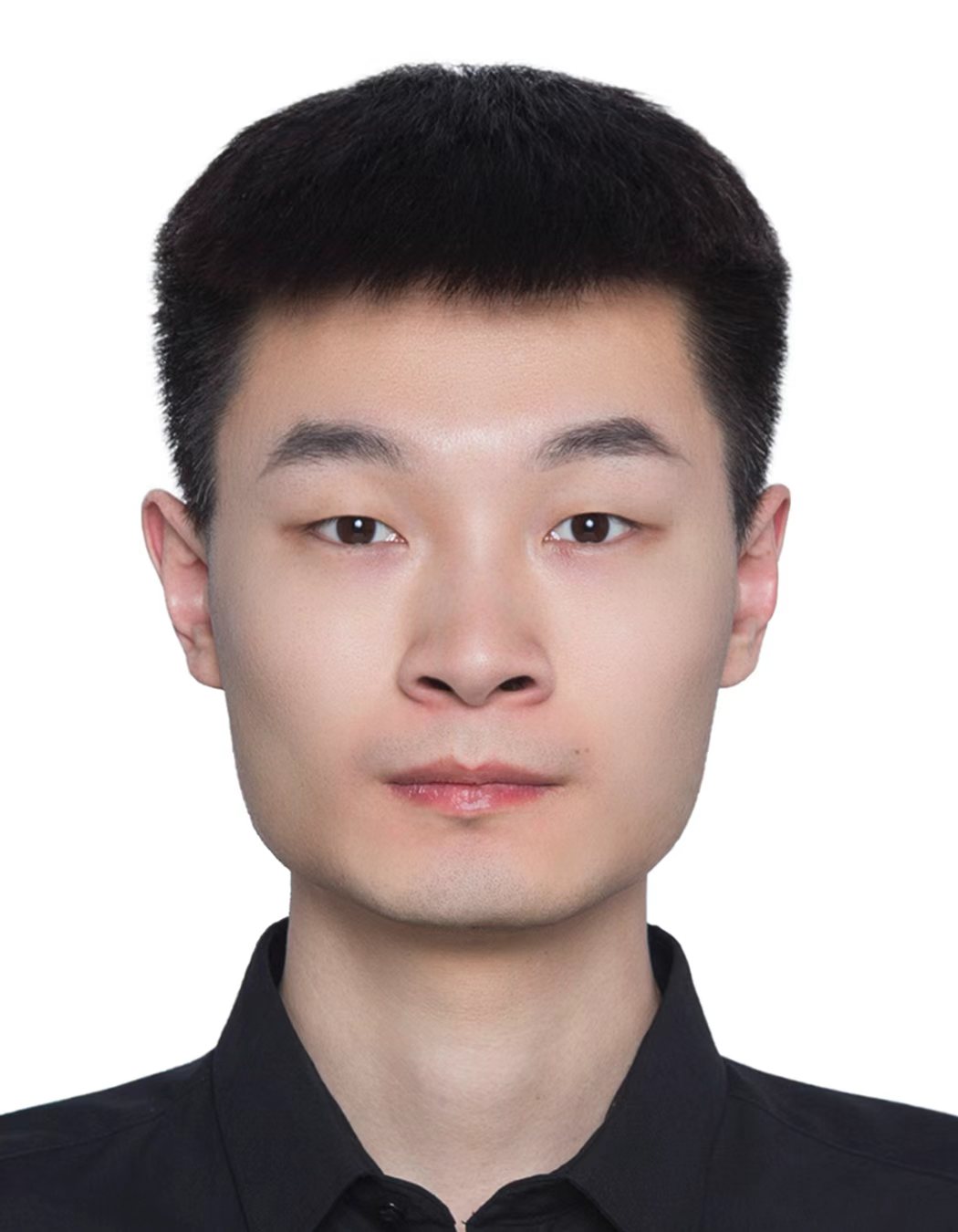}}]
  {Guodong Li} received his B.Eng. degree in computer science and technology from the China University of Petroleum, Qingdao, China, in 2020, and his Ph.D. degree from the School of Cyber Science and Technology, Shandong University, Qingdao, China, in 2025. He is currently a postdoctoral researcher at Shandong University. His research interests focus on channel coding and codes for distributed storage. He received the 2024 Jack Keil Wolf ISIT Student Paper Award.
\end{IEEEbiography}

\begin{IEEEbiography}[{\includegraphics[width=1in,height=1.25in,clip,keepaspectratio]{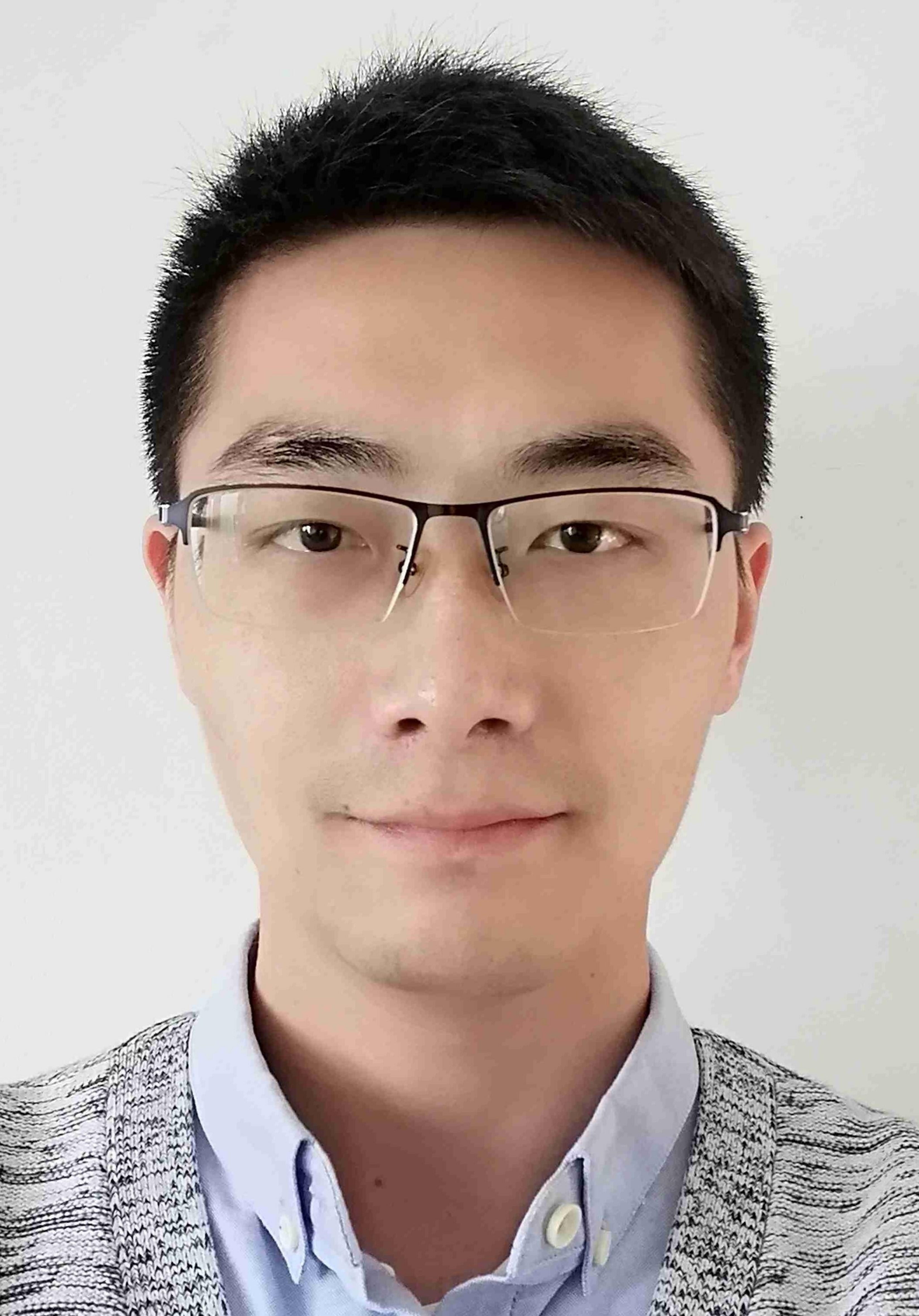}}]
  {Si Wu} received the B.Eng. and Ph.D. degrees in computer science from the University of Science and Technology of China in 2011 and 2016, respectively. He is now a professor in the School of Computer Science and Technology at Shandong University. His research interests include storage reliability and distributed storage.
\end{IEEEbiography}

\begin{IEEEbiography}[{\includegraphics[width=1in,height=1.25in,clip,keepaspectratio]{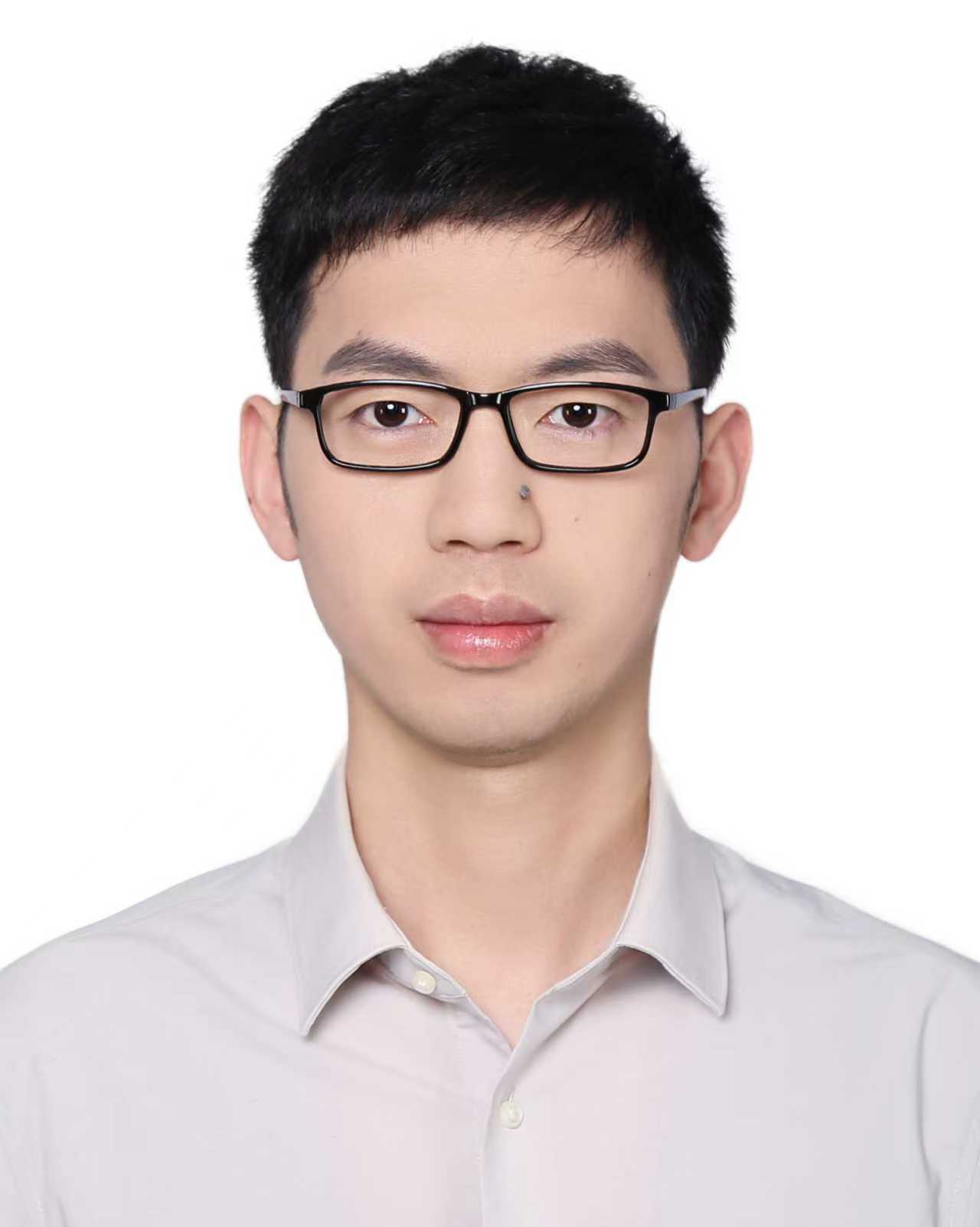}}]
  {Weijun Fang} received the B.S. degree in mathematics and the Ph.D. degree in probability and mathematical statistics from Nankai University, Tianjin, China, in 2013 and 2019, respectively. From 2019 to 2021, he was a Post-Doctoral Researcher with the Tsinghua Shenzhen International Graduate School, Tsinghua University, Shenzhen, China. Since September 2021, he has been a Research Fellow with the School of Cyber Science and Technology, Shandong University, Qingdao, China. He is currently with the Key Laboratory of Cryptologic Technology and Information Security, Ministry of Education, Shandong University. His current research interests include classical coding theory, quantum coding, and coding for distributed storage systems.
\end{IEEEbiography}

\begin{IEEEbiography}[{\includegraphics[width=1in,height=1.25in,clip,keepaspectratio]{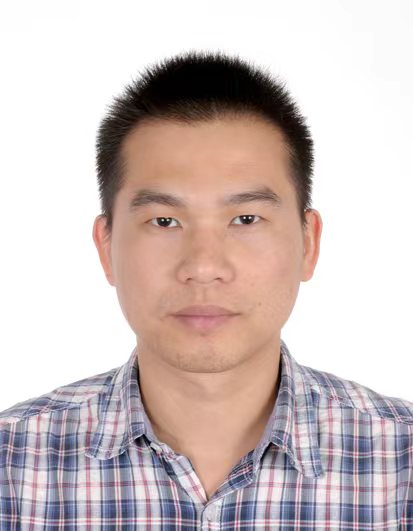}}]
  {Sihuang Hu} received the B.Sc. degree in 2008 from Beihang University, Beijing, China, and the Ph.D. degree in 2014 from Zhejiang University, Hangzhou, China, both in applied mathematics. He is currently a professor at Shandong University, Qingdao, China. Before that, he was a postdoc at RWTH Aachen University, Germany, from 2017 to 2019 and from 2014 to 2015, and a postdoc at Tel Aviv University, Israel, from 2015 to 2017. His research interests include lattices, combinatorics, and coding theory for communication and storage systems. Sihuang is a recipient of the Humboldt Research Fellowship (2017).
\end{IEEEbiography}

\vfill

\clearpage
\setcounter{page}{1}

\appendix

\section{Appendix} \label{sec:appendix} Since Cauchy RS codes are the most commonly used MDS codes, we prove that for any $(k,r)$ Cauchy RS codes over GF($2^w$), we can always find the $k+r-1$ combination coefficients $\gamma_1, \dots, \gamma_k$, $\eta_1, \dots, \eta_{r-1}$ satisfy the equation~\eqref{eq:combination}.

\begin{definition}[Cauchy RS code]\label{def:cauchy}
    Let $a_1, a_2, \dots, a_k$ and $b_1, b_2, \dots, b_r$ be $k+r$ distinct elements in GF($2^w$). The corresponding $(k, r)$ Cauchy code can be defined by the following linear combinations from the $k$ data blocks (denoted by $D_1, D_2, \dots, D_k$) to the $r$ parity blocks (denoted by $G_1, G_2, \dots, G_r$):
    \begin{equation}\label{eq:encoding}
        G_j = \alpha_{1,j}D_1 + \alpha_{2, j}D_2 + \cdots + \alpha_{k, j}D_k,
    \end{equation}
    where $\alpha_{i,j} = (b_j-a_i)^{-1}$, $1\le i\le k$, and $1\le j\le r$.
\end{definition}

\begin{theorem}\label{thm:all-combination}
    For a $(k, r)$ Cauchy RS code defined by the $k+r$ distinct elements $a_1, a_2, \dots, a_k$ and $b_1, b_2, \dots, b_r$, there exist $k+r$ nonzero coefficients $\bar\gamma_1, \dots, \bar\gamma_k$, $\bar\eta_1, \dots, \bar\eta_{r}$ such that
    \begin{equation*}
        \bar\gamma_i + \sum_{j = 1}^{r}\bar\eta_j\alpha_{i,j} = 0, \text{~for all~} 1\le i\le k.
    \end{equation*}
\end{theorem}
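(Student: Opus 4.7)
The plan is to rephrase the $k$ linear constraints as an existence statement about a single low-degree polynomial, and then settle that existence via a linear-algebra counting argument in the space of polynomials of degree at most $r-1$ over $\mathrm{GF}(2^w)$.

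First, I would substitute $\alpha_{i,j} = (b_j - a_i)^{-1}$ into the target identity and consider the rational function $R(x) = \sum_{j=1}^{r} \bar\eta_j/(b_j - x)$, so that the desired equation becomes $\bar\gamma_i = -R(a_i)$. Clearing denominators by multiplying with $D(x) := \prod_{j=1}^{r}(b_j - x)$ produces the polynomial
\[
N(x) := D(x)\,R(x) = \sum_{j=1}^{r} \bar\eta_j \prod_{l \ne j}(b_l - x),
\]
of degree at most $r-1$. Direct evaluation yields $N(b_j) = \bar\eta_j \prod_{l \ne j}(b_l - b_j)$ and $N(a_i) = -\bar\gamma_i \prod_{j=1}^{r}(b_j - a_i)$. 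Since all $k+r$ elements $a_i,b_j$ are distinct, both product factors are nonzero, so $\bar\eta_j \ne 0 \iff N(b_j) \ne 0$ and $\bar\gamma_i \ne 0 \iff N(a_i) \ne 0$. Moreover, the polynomials $\{\prod_{l \ne j}(b_l - x)\}_{j=1}^{r}$ are (up to scaling) the Lagrange basis at the nodes $b_1,\ldots,b_r$, so they span the whole space of polynomials of degree at most $r-1$, and every such polynomial $N$ arises from a unique choice of $\bar\eta$. Hence it suffices to exhibit a polynomial $N \in \mathrm{GF}(2^w)[x]$ of degree at most $r-1$ that does not vanish on $S := \{a_1, \ldots, a_k, b_1, \ldots, b_r\}$.

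Next, I would settle existence of such an $N$ by a union bound in the $r$-dimensional $\mathrm{GF}(2^w)$-vector space $V$ of polynomials of degree at most $r-1$. Writing $q = 2^w$, we have $|V| = q^r$. For each $c \in S$, the evaluation $N \mapsto N(c)$ is a nonzero linear functional, so $H_c := \{N \in V : N(c) = 0\}$ is a \emph{linear} hyperplane of cardinality $q^{r-1}$ which contains the zero polynomial. Using that all $H_c$ share the point $0$, we get
\[
\Bigl|\bigcup_{c \in S} H_c\Bigr| \le 1 + |S|\,(q^{r-1} - 1) = 1 + (k+r)(q^{r-1} - 1).
\]
A $(k,r)$ Cauchy RS code requires $k+r$ distinct field elements, hence $q \ge k+r$, and so the right-hand side is at most $1 + q(q^{r-1} - 1) = q^r - q + 1 < q^r$. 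Therefore $V \setminus \bigcup_{c \in S} H_c$ is nonempty, producing an $N$ with the desired nonvanishing property, and translating back gives nonzero $\bar\gamma_1,\ldots,\bar\gamma_k,\bar\eta_1,\ldots,\bar\eta_r$.

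The main obstacle is the boundary case $k+r = q$: the naive union bound $|S|\cdot q^{r-1} = q^r$ is vacuous there. The plan handles it by exploiting the fact that the hyperplanes are \emph{linear} (all pass through the origin), which tightens the union bound by the additive $-(k+r-1)$ term and still beats $q^r$ as long as $q \ge 2$. Everything else is routine: the reduction in the first paragraph is a Lagrange-interpolation identity, and the final rescaling (dividing by $-\bar\eta_r$, which is nonzero) converts the coefficients produced here into the $\gamma_i,\eta_j$ required by equation~\eqref{eq:combination} in the body of the paper.
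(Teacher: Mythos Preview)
Your proof is correct, but it takes a genuinely different (and more circuitous) route than the paper. Your reduction---rewriting the constraints as the search for a polynomial $N$ of degree at most $r-1$ that avoids all of $S=\{a_1,\ldots,a_k,b_1,\ldots,b_r\}$---is sound, and the union-bound over linear hyperplanes handles the tight case $q=k+r$ correctly. However, once you have that reduction in hand, the counting argument is overkill: the constant polynomial $N(x)\equiv 1$ already vanishes nowhere, so existence is immediate. This is exactly what the paper does (phrased as Lagrange interpolation of the constant function $1$, i.e.\ the partial-fraction identity $1/p(x)=\sum_j 1/((x-b_j)p_j(b_j))$), which moreover yields the explicit formulas $\bar\eta_j=\prod_{z\ne j}(b_j-b_z)^{-1}$ and $\bar\gamma_i=\prod_{z}(a_i-b_z)^{-1}$. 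The paper's approach is therefore both simpler and constructive, which matters here because the explicit coefficients are subsequently used to build the CP-Uniform code; your non-constructive argument would leave the reader without those formulas.
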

\begin{proof}
    We give a constructive proof. A specific choice of these nonzero coefficients can be
    \begin{align}
         & \bar\gamma_i = \prod_{z\in[r]}(a_i-b_z)^{-1}, \text{~for all~} 1\le i\le k, \label{eq:bar-gamma} \\
         & \bar\eta_j   = \prod_{\substack{z\in[r]                                                          \\ z\neq j}}(b_j-b_z)^{-1}, \text{~for all~} 1\le j\le r.\label{eq:bar-eta}
    \end{align}
    Firstly, we define the following polynomials over GF$(2^w)$
    \begin{equation}
        p(x) = \prod_{z\in[r]}(x-b_z)
    \end{equation}
    and
    \begin{equation*}
        p_j(x) = \prod_{\substack{z\in[r]\\z\neq j}}(x-b_z) \text{\quad~for all~} 1\le j\le r.
    \end{equation*}
    By performing Lagrange interpolation for constant function $f(x)=1$ at the points $b_1, b_2, \dots, b_r$, we can obtain the following equation:
    \begin{equation*}
        \frac{1}{p(x)} = \sum_{j\in [r]}\frac{1}{x-b_j}\cdot\frac{1}{p_j(b_j)},
    \end{equation*}
    we can derive that, for all $1\le i\le k$,
    \begin{align*}
        \bar\gamma_i + \sum_{j = 1}^{r}\bar\eta_j\alpha_{i,j}=\frac{1}{p(a_i)} + \sum_{j\in [r]}\frac{1}{a_i-b_j}\cdot\frac{1}{p_j(b_j)} = 0.
    \end{align*}
\end{proof}

\begin{corollary}\label{coro:combination}
    For a $(k, r)$ Cauchy RS code defined by the $k+r$ distinct elements $a_1, a_2, \dots, a_k$ and $b_1, b_2, \dots, b_r$, there exist $k+r$ nonzero coefficients $\bar\gamma_1, \dots, \bar\gamma_k$, $\bar\eta_1, \dots, \bar\eta_{r}$ such that
    \begin{equation}\label{eq:all-combination}
        \bar\gamma_1D_1 +\cdots+\bar\gamma_kD_k + \bar\eta_1G_1 + \cdots + \bar\eta_rG_r = \bm 0.
    \end{equation}
\end{corollary}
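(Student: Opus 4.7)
The plan is to derive Corollary~\ref{coro:combination} directly from Theorem~\ref{thm:all-combination} by plugging the encoding relation into the claimed linear combination. Theorem~\ref{thm:all-combination} already hands us an explicit set of nonzero coefficients $\bar\gamma_1,\dots,\bar\gamma_k,\bar\eta_1,\dots,\bar\eta_r$ (given by~\eqref{eq:bar-gamma} and~\eqref{eq:bar-eta}) that satisfy the pointwise identity
\[
\bar\gamma_i + \sum_{j=1}^{r}\bar\eta_j\alpha_{i,j} = 0 \quad \text{for every } 1\le i\le k.
\]
These are exactly the coefficients I would use in~\eqref{eq:all-combination}; the task is just to verify that the same coefficients work when the $G_j$'s are expanded in terms of the $D_i$'s.

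First I would substitute the encoding equation~\eqref{eq:encoding}, namely $G_j=\sum_{i=1}^k\alpha_{i,j}D_i$, into the left-hand side of~\eqref{eq:all-combination}. Then I would swap the order of summation so that each data block $D_i$ is grouped with a single scalar factor:
\[
\sum_{i=1}^{k}\bar\gamma_i D_i + \sum_{j=1}^{r}\bar\eta_j G_j
= \sum_{i=1}^{k}\left(\bar\gamma_i + \sum_{j=1}^{r}\bar\eta_j\alpha_{i,j}\right)D_i.
\]
By Theorem~\ref{thm:all-combination}, the coefficient of each $D_i$ in the rightmost expression is zero, so the entire linear combination vanishes. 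Nonzeroness of the $\bar\gamma_i$'s and $\bar\eta_j$'s is inherited from their product-of-differences form in~\eqref{eq:bar-gamma}--\eqref{eq:bar-eta}, because the $a_i$'s and $b_j$'s are pairwise distinct elements of $\mathrm{GF}(2^w)$.

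There is no real obstacle here: once Theorem~\ref{thm:all-combination} is in hand, the corollary is a one-line consequence obtained by exchanging two finite sums. The only thing I would double-check is that the identity holds \emph{symbolically} in the $D_i$'s (not just for one specific choice of message), which follows since the coefficient of each $D_i$ is independently zero. For completeness, I would also remark that this is the $\eta_{r}\neq 0$ version of equation~\eqref{eq:combination} used in the CP-Uniform construction: dividing~\eqref{eq:all-combination} through by $-\bar\eta_r$ yields $G_r$ as a nonzero linear combination of $D_1,\dots,D_k,G_1,\dots,G_{r-1}$, which is exactly the form needed to realize the cascaded identity $L_1+\cdots+L_p=G_r$ in Section~\ref{sec:cp-uniform}.
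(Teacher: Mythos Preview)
Your proposal is correct and matches the paper's own argument: the paper states that the corollary follows directly from the encoding equations~\eqref{eq:encoding} and Theorem~\ref{thm:all-combination}, which is exactly the substitution-and-regrouping you spell out. One cosmetic remark: since we are over $\mathrm{GF}(2^w)$ (characteristic~2), the ``$-\bar\eta_r$'' in your final aside is just $\bar\eta_r$, consistent with the paper's normalization $\gamma_i=\bar\gamma_i/\bar\eta_r$, $\eta_j=\bar\eta_j/\bar\eta_r$.
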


The proof of Corollary~\ref{coro:combination} is directly from the encoding equations~\eqref{eq:encoding} and Theorem~\ref{thm:all-combination}. Then we can set the needed $k+r-1$ combination coefficients as
\begin{align*}
    \gamma_i = \bar\gamma_i/\bar\eta_r  \text{~and~} \eta_j   = \bar\eta_j/\bar\eta_r,
\end{align*}
where $\bar\gamma_i$ and $\bar\eta_j$ are defined as equation~\eqref{eq:bar-gamma} and \eqref{eq:bar-eta} for $1\leq i\leq k$ and $1\leq j\leq r-1$. According to Corollary~\eqref{coro:combination}, we know that these $k+r-1$ coefficients satisfy the equation~\eqref{eq:combination}, which is required by the CP-Uniform LRCs.

\end{document}